\newif\ifFull
\newcommand{\R}{\mathbb{R}}  
\newcommand{\dd}[1]{\, d#1}  
\newcommand{\T}{\mathrm{T}}  
\newcommand{\Span}{\mathrm{span}}  
\newcommand{\E}{\mathbb{E}}  
\newcommand{\Cov}{\mathrm{Cov}}  
\newcommand{\given}{\mid}  
\newcommand{\M}{\mathcal{M}}  
\newcommand{\data}{\mathfrak{Data}}  
\newcommand{\mup}{\mu}  
\newcommand{\nuq}{\nu}  
\newcommand{\maxdiv}[2]{D_{\infty}(#1\parallel #2)}  
\newcommand{\maxdiva}[3]{D_{\infty}^{#1}(#2\parallel #3)}  
\newcommand{\Winf}[2]{W_{\infty}(#1, #2)}  
\newcommand{\sensW}[2]{\Delta_{W}(#1,#2)}  
\newcommand{\sensE}[2]{\Delta_{E}(#1,#2)}  
\newcommand{\sensEo}[2]{\Delta_{E,1}(#1,#2)}  
\newcommand{\sensEt}[2]{\Delta_{E,2}(#1,#2)}  
\newcommand{\sensEn}[2]{\Delta_{E,n}(#1,#2)}  
\DeclareMathOperator{\Range}{Range}
\DeclareMathOperator{\support}{supp}
\DeclareMathOperator{\Lap}{Lap}
\DeclareMathOperator{\Gauss}{\mathcal{N}}
\declaretheorem[name=Theorem,numberwithin=section]{theorem}
\declaretheorem[name=Corollary,numberwithin=section]{corollary}
\declaretheorem[name=Lemma,numberwithin=section]{lemma}
\declaretheorem[name=Definition,numberwithin=section]{definition}
\declaretheorem[name=Example,numberwithin=section]{example}
\begin{document}

\twocolumn[

\aistatstitle{Protecting Global Properties of Datasets with Distribution Privacy Mechanisms}

\aistatsauthor{ Michelle Chen \And Olga Ohrimenko }

\aistatsaddress{ The University of Melbourne \And  The University of Melbourne } ]

\begin{abstract}
We consider the problem of ensuring confidentiality of dataset properties aggregated over many records of a dataset. Such properties can encode sensitive information, such as trade secrets or demographic data, while involving a notion of data protection different to the privacy of individual records typically discussed in the literature. In this work, we demonstrate how a \emph{distribution privacy} framework can be applied to formalize such data confidentiality. We extend the Wasserstein Mechanism from Pufferfish privacy and the Gaussian Mechanism from attribute privacy to this framework, then analyze their underlying data assumptions and how they can be relaxed. We then empirically evaluate the privacy-utility tradeoffs of these mechanisms and apply them against a practical property inference attack which targets global properties of datasets. The results show that our mechanisms can indeed reduce the effectiveness of the attack while providing utility substantially greater than a crude group differential privacy baseline. Our work thus provides groundwork for theoretical mechanisms for protecting global properties of datasets along with their evaluation in practice.
\end{abstract}

\section{INTRODUCTION}
While many notions of privacy have been proposed for protecting individual contributors of data \citep{samarati_generalizing_1998,dinur_revealing_2003,dwork_calibrating_2006,kifer_pufferfish_2014}, there are
 situations where it is instead desirable to protect global properties of a dataset.
A hospital looking to share patient treatment data may, for instance, want to protect overall patient demographics to avoid unfounded claims of correlations between diseases and certain demographics. Similarly, sharing of such data can exacerbate sensitive political issues, as occurred when the 1932 Lebanon census revealed the population's religious makeup. Indeed,  a national census has not been conducted in Lebanon since 1932~\citep{us_international_2019} due to the sensitivity of the matter of religious balance in the region~\citep{10.2307/195924}.

Recent works have developed various \emph{property inference attacks} which demonstrate how data analysis algorithms can leak these global properties of datasets. In contrast to privacy attacks such as membership inference attacks \citep{shokri_membership_2017} and model inversion attacks \citep{fredrikson_model_2015}, these attacks aim to discover  properties aggregated over all records in a dataset rather than properties of individual records. \citet{ateniese_hacking_2015} was the first to formulate such an attack, and further property inference attacks have since been developed against deep neural networks \citep{ganju_property_2018}, large convolutional networks \citep{suri_formalizing_2021}, in federated learning settings~\citep{melis_exploiting_2019}, in black-box settings~\citep{zhang_dataset-level_2020}
and in settings where an attacker can
poison the data~\citep{mahloujifar-poison-2022}.

Currently, there is a lack of rigorous mechanisms for defending against property inference attacks~\citep{suri_formalizing_2021,zhang_attribute_2020}.  Intuitive defenses based on regularization or adding noise \citep{ganju_property_2018,melis_exploiting_2019} lack guarantees of protection. On the other hand, \citet{suri_formalizing_2021} proposed a formal model of property inference attacks as a cryptographic game, but did not propose possible defense mechanisms. 

Similarly, while some privacy frameworks have considered notions of global properties of datasets, they still lack rigorous mechanisms for protecting such properties.
Knowledge hiding \citep{verykios_survey_2008,domadiya_hiding_2013} uses a syntactic approach to  hide sensitive properties aggregated over a dataset, but does not provide privacy guarantees~\citep{kifer_pufferfish_2014}. On the other hand, theoretical frameworks such as Pufferfish privacy~\citep{kifer_pufferfish_2014} and distribution privacy~\citep{kawamoto_local_2019} can model protection of these properties, but have few general mechanisms for this purpose \citep{song_pufferfish_2017,zhang_attribute_2020}. To address this, \citet{zhang_attribute_2020} propose attribute privacy for protecting global properties of datasets. However, their definitions and mechanisms involve restrictive data assumptions, consider queries with a single output and have not been evaluated on real data for their impact on utility or how well they protect against attacks.

We aim to contribute groundwork for a theoretically supported approach for protecting against property inference attacks while overcoming some of these shortcomings of previous works.
Our main contributions are as follows:

\begin{itemize}
    \item We demonstrate how distribution privacy~\citep{kawamoto_local_2019} can be applied to formalize protection of global properties of datasets. This provides an arguably simpler and more general alternative to attribute privacy~\citep{zhang_attribute_2020} as it can capture a wide range of global properties using fewer data assumptions.
    \item We adapt the Wasserstein Mechanism from Pufferfish privacy \citep{song_pufferfish_2017} to distribution privacy. Since this mechanism requires computation of $\infty$-Wasserstein distances which may be large, not well defined or hard to compute in practice, we extend it to allow use of an approximation of $\infty$-Wasserstein distance, with corresponding privacy guarantees. This result is of independent interest and can be applied in other settings where the Wasserstein Mechanism is used (e.g.,~\cite{song_pufferfish_2017,zhang_attribute_2020}).
    \item We propose the Expected Value Mechanism for distribution privacy, which generalizes the Gaussian Mechanism of \citet{zhang_attribute_2020}. In particular, we extend the mechanism to queries with multiple components (e.g., multiple statistics of interest), then analyze directional and adversarial uncertainty assumptions for reducing the noise required by the mechanism.
    \item We evaluate our mechanisms on privacy-utility tradeoffs and effectiveness against a property inference attack. The results show that they can reduce attack accuracy while providing substantially greater utility than a group differential privacy baseline.
    To our knowledge, this is the first evaluation of utility and defense  for mechanisms with theoretical guarantees on protecting global properties of real datasets.
\end{itemize}
\section{PRIVACY FRAMEWORK}

In this section, we formalize the protection of global properties of datasets using the distribution privacy framework~\citep{kawamoto_local_2019}.

\subsection{Differential Privacy}
We first outline differential privacy~\citep{dwork_calibrating_2006}, a widely used privacy framework based on the view that the outcome of a data analysis algorithm does not harm an individual if the algorithm’s
output is the same as if the individual’s data were not used in the
analysis.
\begin{definition}[Differential Privacy]
\label{def:dp}
A mechanism $\M$ satisfies $(\epsilon, \delta)$-differential privacy if for all datasets $D$ and $D'$ which differ on at most one record and all subsets $S\subseteq \Range(\M)$,
\[\Pr(\M(D)\in S)\le \exp(\epsilon)\times \Pr(\M(D')\in S)+\delta.\]
\end{definition}

Differential privacy does not apply directly to our problem, as it protects individual records rather than global properties aggregated over many records. Group differential privacy~\citep{dwork_algorithmic_2014} somewhat
addresses this issue, extending protection to groups of records of data. However group differential privacy mechanisms generally only provide meaningful utility when protecting small groups relative to the size of the whole dataset, making them unsuitable for protecting global properties of datasets.

\subsection{Distribution Privacy}
Distribution privacy \citep{kawamoto_local_2019} is similar to differential privacy, but is defined using the possible underlying distributions of a dataset rather than the presence of individual records.
Formally, the distribution privacy framework is instantiated by specifying a set $\Theta$ of data distributions along with a subset $\Psi\subseteq \Theta\times\Theta$ of pairs of distributions. Each distribution $\theta\in\Theta$ may be viewed as a possible probability distribution of a dataset $\data$, intuitively representing an attacker's beliefs on how $\data$ may have been generated.
The pairs of distributions $\Psi$ then reflect  sensitive properties to be protected from the attacker.
\begin{definition}[Distribution Privacy]
\label{def:distp}
A mechanism $\M$ satisfies $(\epsilon, \delta)$-distribution privacy with respect to a set of distribution pairs $\Psi\subseteq \Theta\times\Theta$ if for all pairs $(\theta_i, \theta_j)\in\Psi$ and all subsets $S\subseteq \Range(\M)$,
\begin{align*}
    &\Pr(\M(\data)\in S\given \theta_i)\\
    &\qquad\le \exp(\epsilon)\times \Pr(\M(\data)\in S\given \theta_j)+\delta, 
\end{align*}
where the expression $\Pr(\M(\data)\in S\given \theta)$ denotes the probability that $\M(\data)\in S$ given $\data\sim \theta$.
\end{definition}
We emphasize that  $\data$ is a random variable rather than a fixed dataset instance.
Thus, the probabilities in the above equation are w.r.t.~randomness from both the mechanism~$\M$ and the distribution of~$\data$. 
In this manner, distribution privacy can be considered to share similarities with noiseless privacy~\citep{bhaskar_noiseless_2011}, Pufferfish privacy \citep{kifer_pufferfish_2014} and other notions of privacy~\citep{bassily_coupledw_2013, machanavajjhala_data_2009} which are defined over possible distributions of a dataset rather than fixed dataset instances.

Using the distribution privacy framework, we can model confidentiality of global properties of a dataset $\data$ by identifying possible underlying distributions of $\data$ given particular values for the sensitive properties of interest. 

\begin{example}[Census Data]
Suppose a government body owns a dataset $\data$ of complete census data and seeks to release a summary $F(\data)$ while protecting sensitive properties related to the population's financial status. The government body could formalize their data confidentiality needs
(e.g., hiding the proportion of low income earners) by first identifying data distributions $\theta_p$ modeling data generation scenarios given each possible proportion $p$ of low income earners. They may then decide that an attacker should not be able to infer the proportion~$p$ to within an absolute error of~$k$. Then distribution privacy is defined w.r.t.~$\Psi$, where each pair of distributions in $\Psi$ takes the form $(\theta_{r\mp d}, \theta_{r\pm d})$ with $|d|\le k$. 
\end{example}

The generality of the distribution privacy framework also allows it to simultaneously capture the protection of multiple properties (global or otherwise) by specifying pairs of data distributions in $\Psi$ appropriately for each sensitive property.

Identifying appropriate data distributions $\Theta$ and pairs $\Psi$ is an important
data modeling step
when using the distribution privacy framework. For our experiments in Section~\ref{sec:exps}, we assume that aggregated Census data can be approximated by some multivariate Gaussian distribution and use data sampling to estimate possible parameters. In general, we take an approach similar to Pufferfish privacy \citep{kifer_pufferfish_2014} and assume the existence of a domain expert who can instantiate the framework according to the application requirements. 

\subsection{Relation to Other Privacy Frameworks}

While we adopt the distribution privacy definition of \citet{kawamoto_local_2019}, our work applies their definition in a different context. \citet{kawamoto_local_2019} focused on protecting the distribution of individual records for location based services, corresponding to the case where $\data$ is a single record of data. In contrast,  we consider $\data$ to be an entire dataset, in which case the Tupling Mechanism of \cite{kawamoto_local_2019} and other distribution privacy mechanisms are not applicable.

In this context, the distribution privacy framework  can be viewed as a generalization of the  distributional attribute privacy framework proposed by~\citet{zhang_attribute_2020} for protecting sensitive attributes aggregated across a dataset. However, while distributional attribute privacy requires conditional marginal distributions to be known for all attributes of a dataset, our privacy framework requires such distributions to be known only for the sensitive properties of interest. Distribution privacy thus has the capacity to capture a wider range of global properties while requiring fewer data assumptions.

\section{WASSERSTEIN MECHANISM}

\label{sec:Wass}
In this section, we adapt the Wasserstein Mechanism from Pufferfish privacy \citep{song_pufferfish_2017} to  distribution privacy. Since this mechanism requires computation of $\infty$-Wasserstein distances which may be large or not well defined in practice, we also propose a variation, the Approximate Wasserstein Mechanism, which can be applied in more general settings. We note that our analysis is of independent interest as it can also be used by other mechanisms that require~$\infty$-Wasserstein distances.

For an instantiation of distribution privacy  specified by a set of pairs of distributions $\Psi\subseteq\Theta\times\Theta$,  let $F(\data)$ denote a query, or function of the data which the data curator would like to release. 
We assume that $F$ takes values in $\R^m$.
Letting $f_\theta$ denote the distribution of $F(\data)$ given $\data\sim\theta$, we intuitively need to apply enough noise to prevent an attacker from determining whether $F(\data)$ was drawn from $f_{\theta_i}$ or $f_{\theta_j}$, for each pair of distributions $(\theta_i,\theta_j)\in\Psi$.
For Pufferfish privacy, \citet{song_pufferfish_2017} identify $\infty$-Wasserstein distance as one suitable measure for determining the amount of noise required.

\begin{definition}[$\infty$-Wasserstein Distance]
\label{def:wasserstein_real}
Let $\mup$ and $\nuq$ be two  distributions on $\R^m$. Let $\Gamma(\mup, \nuq)$ be the set of all joint distributions with marginals $\mup$ and $\nuq$. The $\infty$-Wasserstein distance $\Winf{\mup}{\nuq}$ between $\mup$ and $\nuq$ is defined as
\begin{align*}
    \Winf{\mup}{\nuq}=\inf_{\gamma\in\Gamma(\mup,\nuq)}\max_{(x,y)\in \support(\gamma)}\|x-y\|_1.
\end{align*}
\end{definition}

Each $\gamma\in\Gamma(\mup,\nuq)$ may be interpreted as a way of transforming $\mup$ into $\nuq$ by shifting probability mass
between the two distributions. The expression $\max_{(x,y)\in \support(\gamma)}\|x-y\|_1$ can then be interpreted as the cost of $\gamma$, representing the maximum $L_1$ distance traveled by a probability mass in the shifting described by $\gamma$. Thus, the $\infty$-Wasserstein distance between distributions $\mup$ and $\nuq$ is intuitively the maximum distance traveled by a probability mass when transforming $\mup$ into $\nuq$ in the most cost-efficient manner possible.

We adapt the Wasserstein Mechanism of \citet{song_pufferfish_2017} by scaling Laplace noise  to the maximum $\infty$-Wasserstein distance between  pairs of distributions of interest
\begin{align*}
    \sensW{\Psi}{F}=\sup_{(\theta_{i}, \theta_{j})\in \Psi}\Winf{f_{\theta_i}}{ f_{\theta_j}}.
\end{align*}

\begin{restatable}{thm}{wasserstein}
\label{thm:wasserstein}
The mechanism $\M(\data)=F(\data)+Z$, where $Z\sim\Lap(0, (\sensW{\Psi}{F}/\epsilon) I)$, satisfies $(\epsilon,0)$-distribution privacy with respect to $\Psi$.
\end{restatable}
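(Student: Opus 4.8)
The plan is to prove the stronger \emph{pointwise} statement that the output density of $\M$ under $\theta_i$ is everywhere at most $\exp(\epsilon)$ times the output density under $\theta_j$, for every pair $(\theta_i,\theta_j)\in\Psi$; integrating this inequality over an arbitrary measurable $S$ then immediately gives $(\epsilon,0)$-distribution privacy (the additive term is $0$). Write $W=\sensW{\Psi}{F}$ and $b=W/\epsilon$, and let $\ell(z)=(2b)^{-m}\exp(-\|z\|_1/b)$ be the density on $\R^m$ of $Z\sim\Lap(0,bI)$. Since $Z$ is independent of $\data$, the density of $\M(\data)=F(\data)+Z$ given $\data\sim\theta$, evaluated at $t\in\R^m$, is the convolution $p_\theta(t)=\E_{x\sim f_\theta}[\ell(t-x)]$, which is well defined for \emph{any} $f_\theta$ (no density assumption on $F(\data)$ is needed) because $\ell$ is bounded and integrable.

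Next I would fix a pair $(\theta_i,\theta_j)\in\Psi$ and pass to a near-optimal transport plan. By Definition~\ref{def:wasserstein_real} and the definition of $\sensW{\Psi}{F}$, for every $\eta>0$ there is a coupling $\gamma\in\Gamma(f_{\theta_i},f_{\theta_j})$ with $\|x-y\|_1\le W+\eta$ for $\gamma$-almost all $(x,y)$. Rewriting both convolutions against this common coupling gives $p_{\theta_i}(t)=\E_{(x,y)\sim\gamma}[\ell(t-x)]$ and $p_{\theta_j}(t)=\E_{(x,y)\sim\gamma}[\ell(t-y)]$. The key estimate is the Laplace density ratio bound: for all $x,y,t$, the triangle inequality for $\|\cdot\|_1$ yields $\ell(t-x)/\ell(t-y)=\exp\bigl((\|t-y\|_1-\|t-x\|_1)/b\bigr)\le\exp(\|x-y\|_1/b)$, which on $\support(\gamma)$ is at most $\exp\bigl((W+\eta)\epsilon/W\bigr)$. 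Hence $\ell(t-x)\le\exp\bigl((W+\eta)\epsilon/W\bigr)\,\ell(t-y)$ holds $\gamma$-almost surely, and taking expectations over $\gamma$ gives $p_{\theta_i}(t)\le\exp\bigl((W+\eta)\epsilon/W\bigr)\,p_{\theta_j}(t)$ for every $t$. Letting $\eta\to 0$ yields $p_{\theta_i}(t)\le\exp(\epsilon)\,p_{\theta_j}(t)$. Finally, for any measurable $S\subseteq\Range(\M)$, $\Pr(\M(\data)\in S\given\theta_i)=\int_S p_{\theta_i}(t)\dd{t}\le\exp(\epsilon)\int_S p_{\theta_j}(t)\dd{t}=\exp(\epsilon)\Pr(\M(\data)\in S\given\theta_j)$, and since $(\theta_i,\theta_j)\in\Psi$ was arbitrary this is exactly $(\epsilon,0)$-distribution privacy w.r.t.~$\Psi$.

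The main obstacle is the subtlety already flagged in the text: the infimum in the $\infty$-Wasserstein distance need not be attained, and $\sensW{\Psi}{F}$ is moreover a supremum over $\Psi$, so the argument cannot rest on a single optimal coupling — it must go through $\eta$-suboptimal couplings and the limit $\eta\to0$, as above. Secondary points to verify are that such a coupling can be chosen so the cost bound holds $\gamma$-almost surely (this is what the $\max$ over $\support(\gamma)$ in Definition~\ref{def:wasserstein_real} encodes), measurability of $t\mapsto p_\theta(t)$ so the final integration is legitimate (immediate from Tonelli applied to the nonnegative map $(x,t)\mapsto\ell(t-x)$), and the degenerate case $W=\infty$, which should be ruled out by the implicit standing assumption $\sensW{\Psi}{F}<\infty$ (otherwise the mechanism is not well defined) or treated as vacuous. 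An alternative route avoiding densities runs the same steps with $\ell$ replaced by the Laplace probability measure together with the substitution $z=t-x$, but the density formulation above is the cleanest.
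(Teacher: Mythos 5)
Your proof is correct and follows essentially the same route as the paper: couple $f_{\theta_i}$ and $f_{\theta_j}$, apply the Laplace density-ratio (triangle inequality) bound pointwise on the support of the coupling, and integrate — exactly the argument the paper spells out for the approximate variant (Theorem~\ref{thm:wasserstein_approx}) and inherits for Theorem~\ref{thm:wasserstein}. Your extra care in passing through $\eta$-suboptimal couplings and letting $\eta\to 0$ correctly handles the fact that the infimum in $\Winf{\cdot}{\cdot}$ (and the supremum in $\sensW{\Psi}{F}$) need not be attained, a point the paper glosses over.
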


Now, this mechanism assumes that $\Winf{f_{\theta_i}}{ f_{\theta_j}}$ is well-defined for each pair $(\theta_i,\theta_j)\in\Psi$. This can be an issue if, for instance, the query function $F$ is unbounded. Similarly, if $F$ takes on a large range of values with low probability (e.g., average salary), these $\infty$-Wasserstein distances can be large, resulting in more noise than is necessary for the majority of cases. Thus, we prove an variation of the Wasserstein Mechanism which can require less noise in exchange for a small probability of  loss of privacy.

\subsection{Approximate Wasserstein Mechanism}
\label{sec:wasserstein_approx}
Our variation essentially allows a low probability set of possible values of $F$ to be disregarded while still providing  $(\epsilon,\delta)$-distribution privacy. Formally, we first generalize $\infty$-Wasserstein distance as follows.

\begin{definition}[$(W, \delta)$-closeness]
\label{def:closeness}
Let $\mup$ and $\nuq$ be two  distributions on $\R^m$. We say that $\mup$ and $\nuq$ are \emph{$(W, \delta)$-close}  if there exists a distribution $\gamma\in\Gamma(\mup,\nuq)$ and subset $R\subseteq \support(\gamma)$ such that
\begin{align*}
    \|x-y\|_{1}\le W\qquad\forall \, (x,y)\in R
\end{align*}
and
\begin{align*}
    \int\int_{(x,y)\in R}\gamma(x,y)\dd{x}\dd{y}\ge 1-\delta.
\end{align*}
\end{definition}

Recalling the view of each $\gamma\in\Gamma(\mup,\nuq)$ as a transformation of probability distributions, we may interpret $(W,\delta)$-closeness as the ability to transform one distribution into another by shifting each probability mass by a distance of at most $W$, with exceptions of mass at most $\delta$.

The parameter $\delta$ allows for two distributions $\mup$ and $\nuq$ to be considered $(W,\delta)$-close for values of $W$ that are potentially much smaller than $\Winf{\mup}{\nuq}$. To illustrate, consider distributions $\mup$ and $\nuq$ 
such that $\mup$ assigns probabilities $0.6$, $0.2$, $0$, $0.2$ and $\nuq$ assigns probabilities $0.4$, $0.3$,  $0.2$, $0.1$ to elements $1$, $2$,  $3$, $100$ respectively. Then, as shown in Figure \ref{fig:close_distributons},  $\Winf{\mup}{\nuq}=97$ is large since probability mass must be shifted from $\mup(100)$ to $\nuq(3)$. However, by disregarding the shift associated with this particular probability mass, we see that $\mup$ and $\nuq$ are $(W,0.1)$-close with $W=1$.

\begin{figure}
\begin{center}
\includegraphics{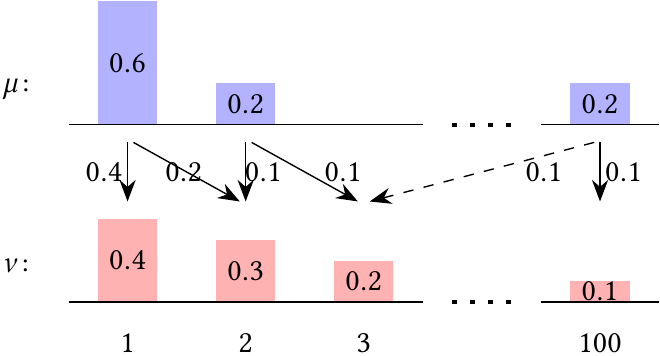}
\vspace{.2in}
\caption{Distinction between $\infty$-Wasserstein distance and $(W,\delta)$-closeness. $\Winf{\mup}{\nuq}=97$ due to the shift represented by the dashed arrow, while  disregarding this shift shows that $\mup$ and $\nuq$ are $(W,0.1)$-close with $W=1$.
}
\label{fig:close_distributons}
\end{center}
\end{figure}

This difference shows how scaling noise according to $(W,\delta)$-closeness can result in significantly less noise than  scaling to $\infty$-Wasserstein distance. The ability to disregard a small mass $\delta$ also allows $(W,\delta)$-closeness to be well-defined even when $\infty$-Wasserstein distance may not be.

\begin{restatable}{thm}{wassersteinapprox}
\label{thm:wasserstein_approx}
Suppose that, for all pairs $(\theta_i,\theta_j)\in\Psi$, the distributions $f_{\theta_{i}}$ and $f_{\theta_j}$ are $(W,\delta)$-close. Then, the mechanism $\M(\data)=F(\data)+Z$, where $Z\sim\Lap(0, (W/\epsilon) I)$, satisfies $(\epsilon,\delta)$-distribution privacy with respect to $\Psi$.
\end{restatable}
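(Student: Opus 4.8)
The plan is to adapt the proof of Theorem~\ref{thm:wasserstein}, inserting the $\delta$ slack exactly where the $(W,\delta)$-closeness coupling leaves a mass of at most $\delta$ uncontrolled. Fix an arbitrary pair $(\theta_i,\theta_j)\in\Psi$ and a set $S\subseteq\R^m$. By hypothesis $f_{\theta_i}$ and $f_{\theta_j}$ are $(W,\delta)$-close, so fix a witnessing coupling $\gamma\in\Gamma(f_{\theta_i},f_{\theta_j})$ together with a set $R\subseteq\support(\gamma)$ on which $\|x-y\|_1\le W$ and with $\iint_{R}\gamma(x,y)\dd{x}\dd{y}\ge 1-\delta$. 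The target inequality is $\Pr(\M(\data)\in S\mid\theta_i)\le e^{\epsilon}\Pr(\M(\data)\in S\mid\theta_j)+\delta$.

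First I would rewrite the left-hand side as an integral against the coupling. Since $\data\sim\theta_i$ gives $F(\data)\sim f_{\theta_i}$, which is the first marginal of $\gamma$, and $Z\sim\Lap(0,(W/\epsilon)I)$ is independent of $\data$,
\[
\Pr(\M(\data)\in S\mid\theta_i)=\iint \Pr(x+Z\in S)\,\gamma(x,y)\dd{x}\dd{y}.
\]
I would then split this integral into the part over $R$ and the part over $\support(\gamma)\setminus R$. On the complement, bound $\Pr(x+Z\in S)\le 1$ and use $\gamma(\support(\gamma)\setminus R)\le\delta$, so this piece contributes at most $\delta$.

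The heart of the argument is the part over $R$. For any $(x,y)\in R$ we have $\|x-y\|_1\le W$, and the density of $x+Z$ at a point $t$ is proportional to $\exp(-\epsilon\|t-x\|_1/W)$; by the triangle inequality $\|t-x\|_1\le\|t-y\|_1+\|x-y\|_1\le\|t-y\|_1+W$, so the density of $x+Z$ is pointwise at most $e^{\epsilon}$ times that of $y+Z$, whence $\Pr(x+Z\in S)\le e^{\epsilon}\Pr(y+Z\in S)$ for every $(x,y)\in R$. Substituting this into the $R$-integral, enlarging the domain of integration back to all of $\support(\gamma)$ (the integrand is nonnegative), and using that the second marginal of $\gamma$ is $f_{\theta_j}$, the $R$-term is at most $e^{\epsilon}\Pr(\M(\data)\in S\mid\theta_j)$. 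Adding the two pieces gives the claimed bound, and since $(\theta_i,\theta_j)\in\Psi$ and $S$ were arbitrary, $\M$ satisfies $(\epsilon,\delta)$-distribution privacy.

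I expect the only delicate point to be the measure-theoretic bookkeeping around the coupling: realizing $f_{\theta_i}$ as the first marginal of $\gamma$, justifying the Fubini/disintegration step that turns $\Pr(\M(\data)\in S\mid\theta_i)$ into the double integral above, and handling the fact that $R$ need not be a product set when splitting the integral. This is routine but must be written with care; everything else reduces to the standard Laplace density-ratio computation already underlying Theorem~\ref{thm:wasserstein}. If $\gamma$, $f_{\theta_i}$, or $f_{\theta_j}$ fail to admit densities, the same steps go through with $\gamma(x,y)\dd{x}\dd{y}$ replaced by integration against the measure $\gamma$ and the pointwise density comparison interpreted via the regular conditional distributions of $x+Z$ and $y+Z$.
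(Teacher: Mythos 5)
Your proposal is correct and follows essentially the same route as the paper's proof: decompose the coupling integral into the part over $R$ (where the componentwise Laplace density-ratio argument gives the $e^{\epsilon}$ factor) and its complement (bounded by $\delta$), then enlarge the $R$-integral back to the full coupling to recover $\Pr(\M(\data)\in S\mid\theta_j)$. The only cosmetic difference is that you spell out the triangle-inequality density comparison where the paper simply cites the Laplace Mechanism from differential privacy.
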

\begin{proof}
Let $(\theta_i,\theta_j)\in\Psi$ be a pair of distributions. Since $f_{\theta_i}$ and $f_{\theta_j}$ are $(W,\delta)$-close, there exists a distribution $\gamma\in\Gamma(f_{\theta_i},f_{\theta_j})$  and a subset $R\subseteq\support(\gamma)$ as in the conditions of Definition \ref{def:closeness}. Now, for all $S\subseteq \Range(\M)$,
\begin{align*}
&\Pr(\M(\data)\in S\given\theta_i)\\
&=\int_{t}\Pr(F(\data)=t\given\theta_i)\Pr(Z+t\in S)\dd{t}\nonumber\\
&=\int_{t}\int_{s}\gamma(t,s)\Pr(Z+t\in S)\dd{s}\dd{t}.
\end{align*}
Since each component of noise $Z_k\sim\Lap(W/\epsilon)$ is independently sampled, the Laplace Mechanism from differential privacy \citep{dwork_calibrating_2006} 
shows that
\begin{align*}
&\int\int_{(t,s)\in R}\gamma(t,s)\Pr(Z+t\in S)\dd{s}\dd{t}\nonumber\\
&\le \exp(\epsilon)\int\int_{(t,s)\in R}\gamma(t,s)\Pr(Z+s\in S)\dd{s}\dd{t},
\end{align*}
since $\|t-s\|_{1}\le W$ for all $(t,s)\in R$. 
We also have
\begin{align*}
&\int\int_{(t,s)\not\in R}\gamma(t,s)\Pr(Z+t\in S)\dd{s}\dd{t}\\
&\le \int\int_{(t,s)\not\in R}\gamma(t,s)\dd{s}\dd{t}\le\delta.
\end{align*}
Thus,
\begin{align*}
&\Pr(\M(\data)\in S\given\theta_i)\\
&\le \exp(\epsilon)\int\int_{(t,s)\in R}\gamma(t,s)\Pr(Z+s\in S)\dd{s}\dd{t}+\delta\\
&\le\exp(\epsilon)\Pr(\M(\data)\in S\given\theta_j)+\delta.
\end{align*}
It follows that $\M$ satisfies $(\epsilon,\delta)$-distribution privacy.
\end{proof}

We remark that $(W,\delta)$-closeness can be viewed as an approximation of the $\infty$-Wasserstein distance between two distributions using a subset of their supports. 
More precisely, given distributions $\mup$ and $\nuq$, suppose that there exist subsets $R_1\subseteq \support(\mup)$ and $R_2\subseteq \support(\nuq)$ such that $\Pr(\mup\in R_1)\ge 1-\delta/2$ and $\Pr(\nuq\in R_2)\ge 1-\delta/2$. The union bound can then  be used to show that   $\Pr(\gamma\in R_1\times R_2)\ge 1-\delta$ for every coupling $\gamma$ of $\mup$ and $\nuq$. Hence,  $\mup$ and $\nuq$ are $(W,\delta)$-close, where 
\begin{align*}
    W=\inf_{\gamma\in\Gamma(\mup,\nuq)}\max_{(x,y)\in \support(\gamma)\cap (R_1\times R_2)}\|x-y\|_1.
\end{align*}
Comparing with the definition of $\infty$-Wasserstein distance, the above expression can be interpreted as only considering the movement of probability masses between the subsets $R_1\subseteq \support(\mup)$ and $R_2\subseteq \support(\nuq)$. Thus, $(W,\delta)$-closeness formalizes the notion that $\Winf{\mup}{\nuq}$ can be approximated using subsets of the supports of $\mup$ and $\nuq$. In turn, this provides privacy guarantees when applying the Wasserstein Mechanism with common methods of computing $\infty$-Wasserstein distance, which  proceed by approximating continuous distributions with bounded or discrete distributions \citep{panaretos_statistical_2019}.

\subsection{Functions Bounded with High Probability}
\label{sec:wasserstein_bounded}

As an example of using the Approximate Wasserstein Mechanism, consider settings in which the queries of interest are bounded with high probability, as is often the case for aggregated statistics such as average salary. Such settings allow the following efficient application of the Approximate Wasserstein Mechanism.

\begin{restatable}{thm}{boundedwhp}
\label{prop:bounded_whp}
    Suppose that $\|F(\data)-\E[F(\data)]\|_1\le c$ with probability at least $1-\delta/2$ for each $\data\sim\theta\in\Theta$. Let 
\begin{align*}
    \sensE{\Psi}{F}=\sup_{(\theta_{i},\theta_{j})\in\Psi} \|\E[f_{\theta_i}]- \E[f_{\theta_j}]\|_1.
\end{align*}
    Then, the distributions $f_{\theta_i}$ and $f_{\theta_j}$ are $(\sensE{\Psi}{F}+2c,\delta)$-close for all pairs $(\theta_i,\theta_j)\in\Psi$.
\end{restatable}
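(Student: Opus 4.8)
The plan is to instantiate the ``subsets-of-supports'' construction sketched in the remark following Theorem~\ref{thm:wasserstein_approx}, taking the two high-probability sets to be $\ell_1$-balls of radius $c$ around the two conditional means. Fix a pair $(\theta_i,\theta_j)\in\Psi$ and write $\mup=f_{\theta_i}$, $\nuq=f_{\theta_j}$; by definition of $f_\theta$ as the law of $F(\data)$ given $\data\sim\theta$, we have $\E[\mup]=\E[F(\data)\mid\theta_i]$ and $\E[\nuq]=\E[F(\data)\mid\theta_j]$. Define $R_1=\{x\in\R^m:\|x-\E[\mup]\|_1\le c\}$ and $R_2=\{y\in\R^m:\|y-\E[\nuq]\|_1\le c\}$. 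Applying the hypothesis once with $\data\sim\theta_i$ and once with $\data\sim\theta_j$ gives $\Pr(\mup\in R_1)\ge 1-\delta/2$ and $\Pr(\nuq\in R_2)\ge 1-\delta/2$.

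Next I would pick any coupling $\gamma\in\Gamma(\mup,\nuq)$ — the independent coupling already works, since Definition~\ref{def:closeness} only asks for \emph{existence} of a witness — and set $R=\support(\gamma)\cap(R_1\times R_2)$. Two conditions then need checking. For the displacement bound, if $(x,y)\in R$ then $x\in R_1$ and $y\in R_2$, so the triangle inequality for $\|\cdot\|_1$ gives
\[
\|x-y\|_1\le \|x-\E[\mup]\|_1+\|\E[\mup]-\E[\nuq]\|_1+\|\E[\nuq]-y\|_1\le c+\sensE{\Psi}{F}+c,
\]
where $\|\E[\mup]-\E[\nuq]\|_1\le\sensE{\Psi}{F}$ by definition of $\sensE{\Psi}{F}$. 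For the mass bound, $(R_1\times R_2)^c\subseteq(R_1^c\times\R^m)\cup(\R^m\times R_2^c)$, so using that $\gamma$ has marginals $\mup$ and $\nuq$ together with the union bound,
\[
\gamma\big((R_1\times R_2)^c\big)\le \Pr(\mup\notin R_1)+\Pr(\nuq\notin R_2)\le \tfrac{\delta}{2}+\tfrac{\delta}{2}=\delta,
\]
hence $\int\!\!\int_{(x,y)\in R}\gamma(x,y)\dd{x}\dd{y}=\gamma(R_1\times R_2)\ge 1-\delta$. These are exactly the two requirements of Definition~\ref{def:closeness} with $W=\sensE{\Psi}{F}+2c$, so $\mup$ and $\nuq$ are $(\sensE{\Psi}{F}+2c,\delta)$-close; since $(\theta_i,\theta_j)\in\Psi$ was arbitrary, the claim follows.

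I do not expect a substantive obstacle: the argument is a triangle inequality plus a union bound, and is essentially the specialization already foreshadowed in the text. The only points requiring care are bookkeeping — confirming that the $\E[f_{\theta}]$ appearing in $\sensE{\Psi}{F}$ is the same conditional mean used in the hypothesis (immediate from the definition of $f_\theta$), noting that $R$ is measurable as the intersection of $\support(\gamma)$ with a product of closed $\ell_1$-balls, and splitting the ``bad'' event cleanly so the two $\delta/2$ terms sum to exactly $\delta$. If one wanted the sharpest possible constant one could instead report $W=\inf_{\gamma}\max_{(x,y)\in\support(\gamma)\cap(R_1\times R_2)}\|x-y\|_1$ as in the remark, but $\sensE{\Psi}{F}+2c$ is the clean closed-form bound the theorem asks for.
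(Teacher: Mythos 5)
Your proof is correct and matches the paper's argument essentially step for step: the same $\ell_1$-balls of radius $c$ around the conditional means, the same triangle-inequality bound $\|x-y\|_1\le \sensE{\Psi}{F}+2c$ on the product of those balls, and the same union bound giving mass at least $1-\delta$ (the paper takes $R=R_{\theta_i}\times R_{\theta_j}$ directly, while you intersect with $\support(\gamma)$, a harmless bookkeeping refinement). No gaps.
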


The proof of Proposition \ref{prop:bounded_whp} is provided in Appendix \ref{ap:wass_proofs}.

Given Proposition \ref{prop:bounded_whp}, we can apply
Theorem \ref{thm:wasserstein_approx} to deduce that Laplace noise with scale  $(\sensE{\Psi}{F}+2c)/\epsilon$ is sufficient for $(\epsilon, \delta)$-distribution privacy with respect to $\Psi$. Thus, if $F$ is bounded by a small range of values with high probability, then $c$ will be small and  the Approximate Wasserstein Mechanism can be applied while requiring only a small amount of noise.

Observe also that $\sensE{\Psi}{F}+2c$ can be well-defined even if $F$ is unbounded, as long as $F$ is bounded with high probability. Moreover, it only requires expected values and bounds for $f_{\theta}$ to be computed. This is in contrast to the  $\infty$-Wasserstein distances required by the Wasserstein Mechanism (Theorem \ref{thm:wasserstein}), which can be  expensive to compute \citep{song_pufferfish_2017} and moreover may not be well-defined if $F$ is unbounded.

\section{EXPECTED VALUE MECHANISM}
\label{sec:exp_all}
While theoretically applicable to general instantiations of distribution privacy, the Wasserstein Mechanism can be too computationally expensive to use in practice due to the need for $\infty$-Wasserstein distances \citep{song_pufferfish_2017}. The variation we proposed can accommodate more efficient approximations, however, in general, $(W,\delta)$-closeness may still be difficult to compute.

We now propose the Expected Value Mechanism, an alternative, efficient mechanism that can achieve distribution privacy in specialized settings. In particular, it can be applied when the pairs of distributions to be protected are approximate translations of each other (e.g., Gaussian with similar variance). 
Our base mechanism adapts the Gaussian Mechanism from attribute privacy \citep{zhang_attribute_2020}.
We then propose two variants that have smaller noise requirements in circumstances when
directional and adversarial uncertainty assumptions can be made, allowing insights of a domain expert to be incorporated to improve utility.
Finally, we formalize how data assumptions can be relaxed while still providing confidentiality guarantees.
Note that theorem proofs
are provided in the appendix.

\subsection{Mechanism}
\label{sec:basic_exp}
As in previous sections, let $f_\theta$ denote the distribution of $F(\data)$ given  $\data\sim \theta$. We assume that $f_{\theta_i}$ and $f_{\theta_j}$ are translations of each other for each pair $(\theta_i,\theta_j)\in\Psi$. That is, for each such pair, there exists $c\in\R^m$ such that $f_{\theta_i}(t)=f_{\theta_j}(t+c)$ for all $t\in\R^m$. This occurs, for instance, when each distribution can be assumed to be Gaussian with the same variance but different means.

The Expected Value Mechanism achieves distribution privacy by applying Laplace or Gaussian noise proportional to the worst case distance between expected values
\begin{align*}
    \sensEn{\Psi}{F}=\sup_{(\theta_{i},\theta_{j})\in\Psi} \|\E[f_{\theta_i}]- \E[f_{\theta_j}]\|_n.
\end{align*}

\begin{restatable}{thm}{explap}
\label{thm:exp_lap}
Suppose that $f_{\theta_i}$ is a translation of $f_{\theta_j}$ for every pair $(\theta_i,\theta_j)\in\Psi$.
Then, the mechanism $\M(\data)=F(\data)+Z$, where $Z\sim\Lap(0, (\sensEo{\Psi}{F}/\epsilon) I)$, satisfies $(\epsilon,0)$-distribution privacy with respect to $\Psi$.
\end{restatable}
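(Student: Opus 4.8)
The plan is to reduce Theorem~\ref{thm:exp_lap} to the Approximate Wasserstein Mechanism (Theorem~\ref{thm:wasserstein_approx}) in the special case $\delta = 0$. The key structural observation is that when $f_{\theta_i}$ is a translation of $f_{\theta_j}$, the translation vector is forced to coincide with the difference of their expected values, so a translation coupling has worst-case transport cost exactly $\|\E[f_{\theta_i}] - \E[f_{\theta_j}]\|_1$.

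First I would fix a pair $(\theta_i,\theta_j)\in\Psi$ and let $c\in\R^m$ be the vector with $f_{\theta_i}(t) = f_{\theta_j}(t+c)$ for all $t\in\R^m$. If $X\sim f_{\theta_i}$, then the change of variables $t\mapsto t+c$ shows $X+c\sim f_{\theta_j}$; taking expectations gives $\E[f_{\theta_j}] = \E[f_{\theta_i}] + c$, hence $\|c\|_1 = \|\E[f_{\theta_i}] - \E[f_{\theta_j}]\|_1 \le \sensEo{\Psi}{F}$. Next I would exhibit the deterministic coupling $\gamma\in\Gamma(f_{\theta_i}, f_{\theta_j})$ defined as the law of $(X, X+c)$ with $X\sim f_{\theta_i}$: its first marginal is $f_{\theta_i}$ by construction, its second marginal is $f_{\theta_j}$ by the computation above, and every $(x,y)\in\support(\gamma)$ satisfies $\|x-y\|_1 = \|c\|_1$. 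Taking $R = \support(\gamma)$ in Definition~\ref{def:closeness}, the mass condition holds trivially with $\delta=0$, so $f_{\theta_i}$ and $f_{\theta_j}$ are $(\|c\|_1, 0)$-close, and monotonicity of $(W,\delta)$-closeness in $W$ upgrades this to $(\sensEo{\Psi}{F}, 0)$-closeness.

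Since this holds for every pair in $\Psi$, I would then invoke Theorem~\ref{thm:wasserstein_approx} with $W = \sensEo{\Psi}{F}$ and $\delta = 0$ to conclude that $\M(\data) = F(\data) + Z$, with $Z\sim\Lap(0,(\sensEo{\Psi}{F}/\epsilon)I)$, satisfies $(\epsilon,0)$-distribution privacy with respect to $\Psi$. The only step requiring genuine care is the measure-theoretic bookkeeping in identifying $c$ with the difference of means — this tacitly requires the means of the $f_\theta$ to be finite, which is already implicit in the statement since $\sensEo{\Psi}{F}$ is assumed well-defined — together with the verification that the translation coupling has the claimed marginals; the remainder is a direct application of the earlier result. (One could alternatively route through Theorem~\ref{thm:wasserstein} via $\Winf{f_{\theta_i}}{f_{\theta_j}}\le\|c\|_1$, but this would need an extra monotonicity remark about scaling noise above the $\infty$-Wasserstein sensitivity, so the route through Theorem~\ref{thm:wasserstein_approx} is cleaner.)
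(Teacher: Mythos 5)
Your proof is correct, but it is organized differently from the paper's. The paper proves Theorem~\ref{thm:exp_lap} directly (Appendix~\ref{ap:exp_proofs}, by analogy with the proof given for Theorem~\ref{thm:exp_gauss}): it writes $\Pr(\M(\data)\in S\given\theta_i)=\int_t f_{\theta_i}(t)\Pr(Z+t\in S)\dd{t}$, changes variables using $f_{\theta_i}(t)=f_{\theta_j}(t+b)$, bounds $\|b\|_1\le\sensEo{\Psi}{F}$ via the identity $b=\E[f_{\theta_i}]-\E[f_{\theta_j}]$, and then invokes the differentially private Laplace mechanism's shift bound. You instead exhibit the deterministic translation coupling, observe that it certifies $(\|c\|_1,0)$-closeness and hence $(\sensEo{\Psi}{F},0)$-closeness, and reduce to Theorem~\ref{thm:wasserstein_approx} with $\delta=0$. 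Both arguments run on the same engine (the Laplace shift bound applied under a coupling), but your reduction makes explicit a structural fact the paper only states informally in Section~\ref{sec:exps}: under the translation assumption the Laplace Expected Value Mechanism is a special case of the (Approximate) Wasserstein Mechanism, because the translation coupling pins $\Winf{f_{\theta_i}}{f_{\theta_j}}\le\|\E[f_{\theta_i}]-\E[f_{\theta_j}]\|_1$. What the reduction buys is brevity and reuse of proved machinery; what the direct proof buys is freedom from one mild measure-theoretic wrinkle in your route, namely that the deterministic coupling is singular (it has no joint density on $\R^{2m}$), while Definition~\ref{def:closeness} and the proof of Theorem~\ref{thm:wasserstein_approx} are written as if $\gamma$ has a density --- the argument survives if those integrals are read against the measure $\gamma(\dd{x},\dd{y})$, but it is worth flagging. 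Your identification of $c$ with $\E[f_{\theta_j}]-\E[f_{\theta_i}]$, the marginal verification, and the remark about why you route through Theorem~\ref{thm:wasserstein_approx} rather than Theorem~\ref{thm:wasserstein} are all sound.
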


\begin{restatable}{thm}{expgauss}
\label{thm:exp_gauss}
Suppose that $f_{\theta_i}$ is a translation of $f_{\theta_j}$ for every pair $(\theta_i,\theta_j)\in\Psi$.
Let $\sigma\ge c\sensEt{\Psi}{F}/\epsilon$, where $\epsilon\in(0,1)$ and $c=\sqrt{2\ln (1.25/\delta)}$ for some $\delta>0$.
Then, the mechanism $\M(\data)=F(\data)+Z$, where $Z\sim\Gauss(0, \sigma^2 I)$, satisfies $(\epsilon,\delta)$-distribution privacy with respect to $\Psi$.
\end{restatable}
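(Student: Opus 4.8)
The plan is to reduce Theorem~\ref{thm:exp_gauss} to the standard $(\epsilon,\delta)$-analysis of the Gaussian mechanism from differential privacy \citep{dwork_algorithmic_2014}, exploiting the fact that, under the translation hypothesis, the output distributions of $\M$ under $\theta_i$ and $\theta_j$ differ only by a shift whose $\ell_2$-norm is controlled by $\sensEt{\Psi}{F}$. The whole argument parallels the proof of Theorem~\ref{thm:wasserstein_approx} given above, with the Gaussian mechanism and $\ell_2$-sensitivity playing the role that the Laplace mechanism and $\ell_1$-sensitivity played there.

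First I would fix a pair $(\theta_i,\theta_j)\in\Psi$ and unpack the hypothesis: there is $c\in\R^m$ with $f_{\theta_i}(t)=f_{\theta_j}(t+c)$ for all $t$, i.e.\ $F(\data)$ given $\theta_i$ has the same distribution as $F(\data)-c$ given $\theta_j$. Taking expectations gives $c=\E[f_{\theta_j}]-\E[f_{\theta_i}]$, hence $\|c\|_2=\|\E[f_{\theta_i}]-\E[f_{\theta_j}]\|_2\le\sensEt{\Psi}{F}$. Next, for $S\subseteq\Range(\M)$ I would write $\Pr(\M(\data)\in S\given\theta_i)=\int_t\Pr(F(\data)=t\given\theta_i)\,\Pr(Z+t\in S)\dd{t}$, substitute $\Pr(F(\data)=t\given\theta_i)=\Pr(F(\data)=t+c\given\theta_j)$, and change variables $s=t+c$ to obtain $\int_s\Pr(F(\data)=s\given\theta_j)\,\Pr(Z+s-c\in S)\dd{s}$. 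For each fixed $s$, the term $\Pr(Z+s-c\in S)$ is $\Pr(\Gauss(s-c,\sigma^2 I)\in S)$ and the analogous term for $\theta_j$ is $\Pr(\Gauss(s,\sigma^2 I)\in S)$; these are spherical Gaussians whose means are at $\ell_2$-distance $\|c\|_2\le\sensEt{\Psi}{F}$. The condition $\sigma\ge c\,\sensEt{\Psi}{F}/\epsilon$ with $c=\sqrt{2\ln(1.25/\delta)}$ and $\epsilon\in(0,1)$ is exactly the regime in which the Gaussian mechanism guarantees $\Pr(\Gauss(s-c,\sigma^2 I)\in S)\le e^{\epsilon}\Pr(\Gauss(s,\sigma^2 I)\in S)+\delta$, uniformly in $s$. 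Plugging this bound under the integral and using $\int_s\Pr(F(\data)=s\given\theta_j)\dd{s}=1$ yields $\Pr(\M(\data)\in S\given\theta_i)\le e^{\epsilon}\Pr(\M(\data)\in S\given\theta_j)+\delta$, which is the desired conclusion once we note it holds for every pair in $\Psi$ (the bound on $\|c\|_2$ is symmetric, so the ordering of the pair is immaterial).

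The only substantive step is the appeal to the Gaussian-mechanism tail bound; everything else is bookkeeping. The one point that needs care is checking that the additive $\delta$ does not accumulate across the mixture over values of $F(\data)$: because the Gaussian bound holds pointwise in $s$ with the same $\delta$, and the mixing weights integrate to one, the overall error remains $\delta$. A secondary, routine technicality is that the informal density notation $\Pr(F(\data)=t\given\theta)$ should be interpreted with respect to a common dominating measure (or the argument rephrased via couplings as in Definition~\ref{def:closeness}) so that the substitution $s=t+c$ is justified; this presents no real difficulty. The Laplace version, Theorem~\ref{thm:exp_lap}, is obtained by the identical template, replacing the Gaussian mechanism by the Laplace mechanism and $\sensEt{\Psi}{F}$ by $\sensEo{\Psi}{F}$, and gives $\delta=0$.
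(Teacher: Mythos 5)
Your proposal is correct and takes essentially the same route as the paper's proof: rewrite $\Pr(\M(\data)\in S\given\theta_i)$ as a mixture over $f_{\theta_i}$, use the translation hypothesis and a change of variables to re-express it as a mixture over $f_{\theta_j}$ with the noise shifted by a vector of $\ell_2$-norm at most $\sensEt{\Psi}{F}$, apply the differential-privacy Gaussian mechanism bound pointwise, and integrate (correctly noting that the additive $\delta$ does not accumulate because the mixing weights integrate to one). The only cosmetic issue is that you reuse the symbol $c$ for both the translation vector and the constant $\sqrt{2\ln(1.25/\delta)}$ from the theorem statement; the paper names the translation vector $b$ to avoid this clash.
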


\subsection{Variant Using Directional Assumptions}
\label{sec:dir_exp}
We now propose a variant based on the insight that distribution privacy may sometimes be achievable while applying noise only to particular directions or components for a query.
To illustrate, consider a company seeking to release customer statistics while protecting their proportion of female customers. Suppose that particular statistics, such as customer income and investments, tend to increase by certain amounts as the proportion $p$ of female customers varies. If the company models these statistics as changing in an approximately constant direction $v$ as the proportion $p$ changes, it may suffice to add noise only in the direction $v$ to ensure protection of the global property $p$. We formalize this with the Directional Expected Value Mechanism, described in Theorem \ref{thm:dir_exp} as follows.

\begin{restatable}{thm}{direxp}
\label{thm:dir_exp}
Suppose that $f_{\theta_i}$ is a translation of $f_{\theta_j}$ and furthermore that the vector $\E[f_{\theta_i}]-\E[f_{\theta_j}]$ is parallel to the unit vector $v$ for each $(\theta_i,\theta_j)\in \Psi$. Then, the mechanism $\M(\data)=F(\data) + Yv$, where $Y\sim\Lap(\sensEt{\Psi}{F}/\epsilon)$, satisfies $(\epsilon,0)$-distribution privacy with respect to $\Psi$. 
\end{restatable}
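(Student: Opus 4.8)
The plan is to reduce the analysis to the one-dimensional Laplace mechanism by projecting everything onto the direction $v$. Fix a pair $(\theta_i,\theta_j)\in\Psi$ and write $c = \E[f_{\theta_i}] - \E[f_{\theta_j}]$, so that by hypothesis $c = \alpha v$ for some scalar $\alpha$ with $|\alpha| = \|c\|_1 \le \sensEt{\Psi}{F}$ (since $v$ is a unit vector in the $\ell_2$ sense, $\|c\|_1 = |\alpha|\,\|v\|_1$; I would either absorb $\|v\|_1$ into the bound or, more cleanly, note that $\|c\|_2 = |\alpha|$ and the statement is most naturally phrased with $\sensEt{\cdot}{\cdot}$ being the $\ell_2$ distance — I would double-check the norm bookkeeping against the definition of $\sensEn{\Psi}{F}$, and that is the one genuinely fiddly point). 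Because $f_{\theta_i}$ is a translation of $f_{\theta_j}$ by $c$, we have the identity $f_{\theta_i}(t) = f_{\theta_j}(t - c)$ for all $t$, i.e. $F(\data)\mid\theta_i$ has the same law as $F(\data)\mid\theta_j$ shifted by $c = \alpha v$.

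Next I would compute the output density of $\M(\data) = F(\data) + Yv$ under each hypothesis. Conditioning on the value $t$ of $F(\data)$, the output is $t + Yv$; since $Y$ is scalar Laplace noise, the output is supported on the line $\{t + yv : y\in\R\}$. The key step is: for any measurable $S\subseteq\R^m$,
\begin{align*}
\Pr(\M(\data)\in S\mid\theta_i) &= \int_{\R^m} f_{\theta_i}(t)\,\Pr(t + Yv\in S)\dd{t}\\
&= \int_{\R^m} f_{\theta_j}(t - \alpha v)\,\Pr(t + Yv\in S)\dd{t}\\
&= \int_{\R^m} f_{\theta_j}(u)\,\Pr(u + (Y+\alpha)v\in S)\dd{u},
\end{align*}
after the change of variables $u = t - \alpha v$. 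So the only difference between the two conditional output laws is that under $\theta_i$ the scalar noise is effectively $Y + \alpha$ rather than $Y$. Now I invoke the privacy guarantee of the one-dimensional Laplace distribution: for $Y\sim\Lap(b)$ with $b = \sensEt{\Psi}{F}/\epsilon$ and any shift $\alpha$ with $|\alpha|\le\sensEt{\Psi}{F}$, and for every measurable set $A\subseteq\R$, $\Pr(Y+\alpha\in A)\le e^{|\alpha|/b}\Pr(Y\in A)\le e^{\epsilon}\Pr(Y\in A)$. Applying this pointwise in $u$ with $A = A_u := \{y : u + yv\in S\}$ gives $\Pr(u + (Y+\alpha)v\in S)\le e^{\epsilon}\Pr(u + Yv\in S)$, and integrating against $f_{\theta_j}(u)\dd{u}$ yields $\Pr(\M(\data)\in S\mid\theta_i)\le e^{\epsilon}\Pr(\M(\data)\in S\mid\theta_j)$, which is exactly $(\epsilon,0)$-distribution privacy for this pair; since the pair was arbitrary, we are done.

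The main obstacle is not conceptual but a matter of care: making the change-of-variables / Fubini argument rigorous when $f_{\theta_i}$ may be a singular distribution on $\R^m$ (the output lives on a union of parallel lines, which has Lebesgue measure zero in $\R^m$), and correctly tracking the norm convention so that $|\alpha|\le\sensEt{\Psi}{F}/$ something matches the stated noise scale. I would handle the measure-theoretic point by working with the pushforward measures directly rather than densities — i.e., decompose the law of $F(\data)$ and observe that adding $Yv$ convolves each fiber with a Laplace law along $v$ — or, following the style of the proof of Theorem \ref{thm:wasserstein_approx} already in the paper, phrase the translation hypothesis as a coupling and quote the Laplace mechanism exactly as that proof does. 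The norm issue I would resolve by stating explicitly that $\E[f_{\theta_i}]-\E[f_{\theta_j}] = \alpha v$ with $\|v\|$ taken so that $\|\alpha v\|_1 = \sensEt{\Psi}{F}$ in the worst case, keeping consistency with how $\sensEn{\Psi}{F}$ was defined.
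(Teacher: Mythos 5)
Your proposal is correct and follows essentially the same route as the paper's proof: shift the integration variable using the translation hypothesis, project the event onto the line $u+\Span(v)$ (your set $A_u=\{y:u+yv\in S\}$ is exactly the paper's $v^{\T}(S'-s)$), and invoke the one-dimensional Laplace mechanism with shift $|\alpha|=\|b\|_2\le\sensEt{\Psi}{F}$. Your norm worry resolves exactly as you suspected, since $\sensEt{\Psi}{F}$ is defined with the $L_2$ norm and $v$ is an $\ell_2$ unit vector; the paper likewise works with densities and does not treat the singular-measure point any more carefully than you do.
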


If the differences of mean vectors $\E[f_{\theta_i}]-\E[f_{\theta_j}]$ are not all parallel to some vector $v$, the mechanism can be modified by first identifying a set of orthogonal vectors $v_1, v_2,\dots, v_d$ which span the possible directions of $\E[f_{\theta_i}]-\E[f_{\theta_j}]$ as the pairs $(\theta_i,\theta_j)\in\Psi$ vary. Noise proportional to $\sensEt{\Psi}{F}/\epsilon$ can then be applied in the directions of $v_1, v_2,\dots, v_d$, reducing the overall noise requirement if the number of directions $d$ is small in comparison to the total dimension $m$ of the query.
Alternatively, approximations $\tilde{f}_{\theta}$ can be chosen to satisfy the directional assumptions and used in place of the exact distributions~$f_{\theta}$. In this case, analysis can proceed as for our general results on relaxing mechanism assumptions, discussed later in Section~\ref{sec:relax}.
Note also that the mechanism can readily be modified to use Gaussian noise instead of Laplace noise.
\subsection{Variant Using Adversarial Uncertainty}
\label{sec:exp_gauss_variants}
We now discuss how adversarial uncertainty
in the data distributions can be exploited by the mechanism. Using the scenario from the previous section as an example, variance in income may be enough to prevent accurate estimation of the proportion of female customers even if there is a correlation between these properties. The company may then be able to safely release their average customer income with little to no applied noise. 

For this variant, we assume the distribution $f_\theta$ of the query function to be multivariate Gaussian with mean $\mu_\theta$ and covariance matrix $\Sigma_{\theta}$ for each $\theta\in\Theta$. We also assume that  $f_{\theta_i}$ and $f_{\theta_j}$ have the same covariance matrix $\Sigma_{\theta_i}=\Sigma_{\theta_j}$ for each pair $(\theta_i,\theta_j)\in\Psi$.
With these assumptions, we first identify a set of conditions under which a query $F(\data)$ may be safely released without additional noise. 

\begin{restatable}{thm}{gaussreducedmulti}
\label{thm:gauss_reduced_multi}
Suppose that $f_{\theta}\sim\Gauss(\mu_{\theta}, \Sigma_\theta)$  for each $\theta\in\Theta$ and that $\Sigma_{\theta_i}=\Sigma_{\theta_j}$ for each $(\theta_i,\theta_j)\in\Psi$.
Let $c=\sqrt{2\ln(1.25/\delta)}$. Then, the mechanism $\M(\data)=F(\data)$ satisfies $(\epsilon,\delta)$-distribution privacy with respect to $\Psi$ as long as
\begin{align*}
    (\mu_{\theta_i}-\mu_{\theta_j})^{\T} \Sigma_{\theta_i}^{-1}(\mu_{\theta_i}-\mu_{\theta_j})\le (\epsilon/c)^2
\end{align*}
for all $(\theta_i,\theta_j)\in\Psi$.
\end{restatable}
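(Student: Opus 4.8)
The plan is to exploit the fact that, since no noise is added, the conditional output distribution of $\M(\data)$ given $\theta$ is exactly $f_\theta = \Gauss(\mu_\theta,\Sigma_\theta)$. So the theorem reduces to the following statement: for every pair $(\theta_i,\theta_j)\in\Psi$, writing $\Sigma := \Sigma_{\theta_i}=\Sigma_{\theta_j}$, the two Gaussians $\Gauss(\mu_{\theta_i},\Sigma)$ and $\Gauss(\mu_{\theta_j},\Sigma)$ are $(\epsilon,\delta)$-indistinguishable in the sense of Definition~\ref{def:distp}. The hypothesis on the Mahalanobis distance $(\mu_{\theta_i}-\mu_{\theta_j})^{\T}\Sigma^{-1}(\mu_{\theta_i}-\mu_{\theta_j})\le(\epsilon/c)^2$ should be precisely what the classical Gaussian mechanism analysis needs once we account for the non-isotropic covariance.

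First I would \emph{whiten} the problem. Since $\Sigma^{-1}$ appears in the hypothesis, $\Sigma$ is positive definite; set $T=\Sigma^{-1/2}$, which is a linear bijection of $\R^m$. Because $T$ is a bijection, $(\epsilon,\delta)$-indistinguishability transfers through it: for measurable $S$ we have $\Pr(X\in S)=\Pr(TX\in TS)$ and likewise for $T^{-1}$. Under $T$, $f_{\theta_i}\mapsto\Gauss(T\mu_{\theta_i},I)$ and $f_{\theta_j}\mapsto\Gauss(T\mu_{\theta_j},I)$, and the squared $L_2$ distance between the transformed means is $\|T(\mu_{\theta_i}-\mu_{\theta_j})\|_2^2=(\mu_{\theta_i}-\mu_{\theta_j})^{\T}\Sigma^{-1}(\mu_{\theta_i}-\mu_{\theta_j})\le(\epsilon/c)^2$. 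Thus it suffices to show that $\Gauss(a,I)$ and $\Gauss(b,I)$ with $\|a-b\|_2\le\epsilon/c$ are $(\epsilon,\delta)$-indistinguishable in both directions, and then undo the whitening via $T^{-1}$.

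The last step invokes the standard Gaussian mechanism guarantee: the pair $\Gauss(a,I),\Gauss(b,I)$ is exactly the output of the Gaussian mechanism with $L_2$-sensitivity $\Delta_2:=\|a-b\|_2$ and noise scale $\sigma=1$ evaluated on the two ``inputs'' $a$ and $b$. Since $\Delta_2\le\epsilon/c$ we have $\sigma=1\ge c\,\Delta_2/\epsilon$, which is precisely the condition under which the Gaussian mechanism is $(\epsilon,\delta)$-differentially private (the same fact underlying Theorem~\ref{thm:exp_gauss}; see \citep{dwork_algorithmic_2014}), and the classical analysis delivers the inequality in both orderings of $(a,b)$. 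Transferring back through $T^{-1}$ gives $\Pr(\M(\data)\in S\mid\theta_i)\le e^{\epsilon}\Pr(\M(\data)\in S\mid\theta_j)+\delta$ for all $S$, which is $(\epsilon,\delta)$-distribution privacy with respect to $\Psi$.

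The routine points to pin down are (i) the one-line change-of-variables argument that $(\epsilon,\delta)$-indistinguishability passes through the invertible linear map $T$ and its inverse, and (ii) the implicit requirement $\epsilon\in(0,1)$ under which the constant $c=\sqrt{2\ln(1.25/\delta)}$ is valid, matching the hypothesis of Theorem~\ref{thm:exp_gauss}. The only real conceptual obstacle is recognizing that the general-covariance case collapses to the isotropic Gaussian mechanism after whitening, so that no new computation is required; a more self-contained alternative would be to analyze the privacy loss random variable $\ln\bigl(f_{\theta_i}(x)/f_{\theta_j}(x)\bigr)$ for $x\sim f_{\theta_i}$ directly, which is Gaussian with mean $\tfrac12 r^2$ and variance $r^2$ for $r^2=(\mu_{\theta_i}-\mu_{\theta_j})^{\T}\Sigma^{-1}(\mu_{\theta_i}-\mu_{\theta_j})$, and bound its upper tail exactly as in the textbook proof.
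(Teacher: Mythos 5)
Your proposal is correct and follows essentially the same route as the paper's proof: the paper also writes $F(\data)\given\theta=A_\theta Z+\mu_\theta$ with $A_\theta A_\theta^{\T}=\Sigma_\theta$ (i.e., whitens via $A_\theta^{-1}$), observes that the shared covariance reduces the problem to two unit-covariance Gaussians whose means differ by $\|A_{\theta_i}^{-1}(\mu_{\theta_i}-\mu_{\theta_j})\|_2\le\epsilon/c$, and then runs the standard Gaussian-mechanism tail-bound argument. The only difference is presentational --- the paper reproduces the Dwork--Roth computation inline (including the assumption $\epsilon\le 1$, which you correctly flag as implicit in the theorem statement) while you invoke it as a black box.
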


The key observation underlying Theorem \ref{thm:gauss_reduced_multi} is that every multivariate Gaussian variable can be expressed as a transformation $X=AZ+\mu$ of a standard normal  vector $Z$, and $\Pr(X=x)=\frac{1}{|A|}\Pr(Z=A^{-1}x)$ for all $x\in\R^m$.
The proof then proceeds similarly to that of the Gaussian Mechanism from differential privacy \citep{dwork_algorithmic_2014}.

From Theorem \ref{thm:gauss_reduced_multi}, it follows that privacy can be ensured when there is sufficient noise in the eigenvector directions of each covariance matrix $\Sigma_{\theta}$. When these eigenvector directions are the same for each $\theta\in\Theta$, we thus arrive at the Eigenvector Gaussian Mechanism as follows.
\begin{restatable}{thm}{gausseig}
\label{thm:gauss_eig}
Suppose that $f_{\theta}\sim\Gauss(\mu_{\theta}, \Sigma_\theta)$  for each $\theta\in\Theta$ and that $\Sigma_{\theta_i}=\Sigma_{\theta_j}$ for each $(\theta_i,\theta_j)\in\Psi$. Suppose furthermore that the normalized eigenvectors $v_1, v_2,\dots, v_m$ of  $\Sigma_{\theta}$ are the same for each $\theta\in\Theta$. Let $c=\sqrt{2\ln(1.25/\delta)}$. For each $\theta\in\Theta$, let
\begin{align*}
    \sigma_{\theta,k}^2&=\max\left(0, (c\sensEt{\Psi}{F}/\epsilon)^2 - \lambda_{\theta,k}^2\right)
\end{align*}
where $\lambda_{\theta,k}^2=v_k^{\T}\Sigma_{\theta}v_k$ are squared eigenvalues. Let $\sigma_k^2=\max_{\theta\in\Theta}\sigma_{\theta,k}^2$ and $\Sigma=\sum_{k}\sigma_k^2v_{k}v_{k}^{\T}$. Then, the mechanism $\M(\data)=F(\data)+Z$, where $Z\sim\Gauss(0,\Sigma)$, satisfies $(\epsilon,\delta)$-distribution privacy with respect to $\Psi$. 
\end{restatable}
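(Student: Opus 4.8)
The plan is to reduce to Theorem \ref{thm:gauss_reduced_multi} by showing that, after adding the noise $Z\sim\Gauss(0,\Sigma)$, the resulting mechanism $\M(\data)=F(\data)+Z$ has query distributions whose covariance is large enough in every direction for the sufficient condition of Theorem \ref{thm:gauss_reduced_multi} to hold. First I would observe that since $f_\theta\sim\Gauss(\mu_\theta,\Sigma_\theta)$ and $Z$ is independent Gaussian noise, the distribution of $\M(\data)$ given $\data\sim\theta$ is $\Gauss(\mu_\theta,\Sigma_\theta+\Sigma)$. So $\M$ is itself an instance of the ``no extra noise'' setting of Theorem \ref{thm:gauss_reduced_multi}, now with covariance $\Sigma_\theta'=\Sigma_\theta+\Sigma$. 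Note $\Sigma_{\theta_i}'=\Sigma_{\theta_j}'$ for each pair in $\Psi$ since $\Sigma_{\theta_i}=\Sigma_{\theta_j}$ and $\Sigma$ is a fixed matrix, so the hypothesis of Theorem \ref{thm:gauss_reduced_multi} is met. It then suffices to verify, for every $(\theta_i,\theta_j)\in\Psi$, the inequality
\[
    (\mu_{\theta_i}-\mu_{\theta_j})^{\T}(\Sigma_{\theta_i}+\Sigma)^{-1}(\mu_{\theta_i}-\mu_{\theta_j})\le(\epsilon/c)^2.
\]

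Next I would exploit the simultaneous diagonalizability. Since $v_1,\dots,v_m$ are orthonormal and are eigenvectors of every $\Sigma_\theta$, and $\Sigma=\sum_k\sigma_k^2 v_k v_k^{\T}$ is by construction diagonal in the same basis, the matrix $\Sigma_{\theta}+\Sigma$ is diagonal in the $v_k$ basis with eigenvalues $\lambda_{\theta,k}^2+\sigma_k^2$. Writing $u=\mu_{\theta_i}-\mu_{\theta_j}$ and expanding $u=\sum_k \langle u,v_k\rangle v_k$, the quadratic form becomes
\[
    u^{\T}(\Sigma_{\theta_i}+\Sigma)^{-1}u=\sum_{k}\frac{\langle u,v_k\rangle^2}{\lambda_{\theta_i,k}^2+\sigma_k^2}.
\]
By definition $\sigma_k^2=\max_{\theta\in\Theta}\sigma_{\theta,k}^2\ge\sigma_{\theta_i,k}^2=\max(0,(c\sensEt{\Psi}{F}/\epsilon)^2-\lambda_{\theta_i,k}^2)$, so $\lambda_{\theta_i,k}^2+\sigma_k^2\ge(c\sensEt{\Psi}{F}/\epsilon)^2$ for every $k$. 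Therefore
\[
    \sum_{k}\frac{\langle u,v_k\rangle^2}{\lambda_{\theta_i,k}^2+\sigma_k^2}\le\frac{\sum_k\langle u,v_k\rangle^2}{(c\sensEt{\Psi}{F}/\epsilon)^2}=\frac{\|u\|_2^2}{(c\sensEt{\Psi}{F}/\epsilon)^2}.
\]
Finally, since $\|u\|_2=\|\E[f_{\theta_i}]-\E[f_{\theta_j}]\|_2\le\sensEt{\Psi}{F}$ by definition of $\sensEt{\Psi}{F}$ as the supremum of such $2$-norms over $\Psi$, the right-hand side is at most $(\epsilon/c)^2$, which is exactly the condition required by Theorem \ref{thm:gauss_reduced_multi}. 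Applying that theorem to $\M$ gives $(\epsilon,\delta)$-distribution privacy with respect to $\Psi$.

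The bulk of the argument is routine once the right reduction is set up; the main thing to be careful about is the justification that Theorem \ref{thm:gauss_reduced_multi} applies to the noised mechanism $\M$ — i.e., that convolving each Gaussian query distribution with the fixed Gaussian noise yields another family of Gaussians sharing per-pair covariance, so the theorem can legitimately be invoked with $\Sigma_\theta$ replaced by $\Sigma_\theta+\Sigma$. A secondary subtlety is the truncation at $0$ in the definition of $\sigma_{\theta,k}^2$: when $\lambda_{\theta_i,k}^2$ already exceeds $(c\sensEt{\Psi}{F}/\epsilon)^2$ the bound $\lambda_{\theta_i,k}^2+\sigma_k^2\ge(c\sensEt{\Psi}{F}/\epsilon)^2$ still holds trivially because $\sigma_k^2\ge0$, so the per-coordinate inequality goes through uniformly and no case split beyond this observation is needed.
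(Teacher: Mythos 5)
Your proposal is correct and follows essentially the same route as the paper: the paper likewise observes that $F(\data)+Z\sim\Gauss(\mu_\theta,\Sigma_\theta+\Sigma)$ (Corollary \ref{cor:gauss_multi_reduced}), diagonalizes $\Sigma_\theta+\Sigma$ in the shared eigenbasis to show its minimum eigenvalue is at least $(c\sensEt{\Psi}{F}/\epsilon)^2$, and then bounds the quadratic form to invoke Theorem \ref{thm:gauss_reduced_multi} (via Corollary \ref{cor:gauss_eigen}); you have simply inlined those two intermediate corollaries. The only cosmetic gap is the degenerate case $\sensEt{\Psi}{F}=0$, where your division is undefined but the required inequality holds trivially since $\mu_{\theta_i}=\mu_{\theta_j}$.
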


\paragraph{Example:} 
Consider an instantiation of distribution privacy given by  $\Theta=\{\theta_1,\theta_2\}$ and  $\Psi=\{(\theta_1,\theta_2),(\theta_2,\theta_1)\}$ and suppose that $f_{\theta_1}$ and $f_{\theta_2}$ are normally distributed with
$$\mu_{\theta_1} = 
\begin{bmatrix}
100\\
101
\end{bmatrix},
\,
\mu_{\theta_2} = 
\begin{bmatrix}
99\\
102
\end{bmatrix},
\,
\Sigma_{\theta_1}=\Sigma_{\theta_2}=
\begin{bmatrix}
22&-6\\
-6&13
\end{bmatrix}.
$$
If $\epsilon=1$ and $\delta=0.001$, then $(c\sensEt{\Psi}{F}/\epsilon)^2\approx 28.52$. Since the eigenvectors of $\Sigma_{\theta_1}=\Sigma_{\theta_2}$ are  $v_1=(1/\sqrt{5},2/\sqrt{5})$ and $v_2=(2/\sqrt{5},-1/\sqrt{5})$ with corresponding eigenvalues 10 and 25, the Eigenvector Gaussian Mechanism is equivalent to adding Gaussian noise with variance $\sigma_{1}^2\approx 18.52$ in the direction of $v_1$ and variance $\sigma_{2}^2\approx 3.52$ in the direction of $v_2$. In comparison, if adversarial uncertainty was not exploited, Gaussian noise with variance $28.52$ would need to be added in both directions --- significantly impacting the utility.

\medskip

We note that the Gaussian Mechanism of \citet{zhang_attribute_2020} also exploits adversarial uncertainty, but only in the one-dimensional case. Thus, our work generalizes this idea to higher dimensions. We also remark that adversarial uncertainty can be exploited together with the directional assumptions of Section \ref{sec:dir_exp}, as we detail in Appendix \ref{ap:dir_gauss}.
\subsection{Privacy Guarantees Under Relaxed Assumptions}
\label{sec:relax}
In our analysis so far, we adopted various translation assumptions, directional assumptions, and adversarial uncertainty assumptions on the distributions involved.
These assumptions provide an initial means of incorporating domain knowledge to improve utility, however, may be difficult to apply in practice.
One approach to relax the assumption requirements is to use approximations $\tilde{f}_{\theta}$ in place of the exact distributions $f_{\theta}$, producing a mechanism $\widetilde{\M}(\data)=F(\data)+\widetilde{Z}$ where the query $F(\data)$ follows a true distribution $f_\theta$ for some $\theta\in\Theta$, but $\widetilde{Z}$ is computed based on the approximations $\tilde{f}_{\theta}$. The resultant privacy loss can then be quantified with max-divergence, similar to~\citet{zhang_attribute_2020} for their setting.
\begin{definition}[$\delta$-Approximate Max-Divergence]
\label{def:maxdiv_approx}
Let $\mup$ and $\nuq$ be two distributions. The $\delta$-approximate max-divergence $\maxdiva{\delta}{\mup}{\nuq}$ between $\mup$ and $\nuq$ is defined as
\begin{align*}
    \maxdiva{\delta}{\mup}{\nuq}=\sup_{S\subseteq\support(\mup):\Pr(\mup\in S)\ge \delta}\ln\frac{\Pr(\mu\in S)-\delta}{\Pr(\nuq\in S)}.
\end{align*}
\end{definition}

\begin{restatable}{thm}{othermaxdiva}
\label{thm:other_maxdiva}
Suppose that $\M$ satisfies $(\epsilon,\delta)$-distribution privacy with respect to $\Psi$ and that $\M(\data)\given F(\data)$ is independent of $\data$. For each $\theta\in\Theta$, let $\tilde{f}_{\theta}$ be an approximation of $f_{\theta}$ and suppose that $$\max\left(\maxdiva{\eta}{\tilde{f}_{\theta}}{f_{\theta}},\maxdiva{\eta}{f_{\theta}}{\tilde{f}_{\theta}}\right)\le \lambda$$
for all $\theta\in\Theta$. Then, the mechanism $\widetilde{\M}$ which uses the approximations $\tilde{f}_{\theta}$ in place of the true distributions $f_{\theta}$ satisfies $(\epsilon',\delta')$-distribution privacy with respect to $\Psi$, where $\epsilon'=\epsilon+2\lambda$ and $\delta'=(1+\exp(\epsilon+\lambda))\eta+\exp(\lambda)\delta$. 
\end{restatable}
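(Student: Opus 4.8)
The plan is to sandwich the output law of $\widetilde{\M}$ under each true distribution $f_{\theta}$ between the output laws under the approximations $\tilde{f}_{\theta}$, then chain three approximate-indistinguishability bounds. Fix a pair $(\theta_i,\theta_j)\in\Psi$ and a measurable set $S\subseteq\Range(\widetilde{\M})$. The assumption that $\widetilde{\M}(\data)\given F(\data)$ is independent of $\data$ means there is a single Markov kernel $K$ — the conditional law of the mechanism's output given the value of the query — so that feeding a query value drawn from any distribution $\mu$ on $\R^m$ through $\widetilde{\M}$ produces the output law $K\mu := \int K(t,\cdot)\, d\mu(t)$. In particular $\Pr(\widetilde{\M}(\data)\in S\given\theta)=(Kf_{\theta})(S)$, and I will also track the ``pretend'' laws $(K\tilde{f}_{\theta})(S)$ obtained by pushing $\tilde{f}_{\theta}$ through the same kernel.

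The first ingredient is the standard post-processing fact that $\eta$-approximate max-divergence does not increase under a Markov kernel: if $\maxdiva{\eta}{\mu}{\nu}\le\lambda$ then $\maxdiva{\eta}{K\mu}{K\nu}\le\lambda$. This follows from the characterization $\maxdiva{\eta}{\mu}{\nu}\le\lambda \iff \Pr(\mu\in A)\le e^{\lambda}\Pr(\nu\in A)+\eta$ for all measurable $A$, combined with the layer-cake identity $K(t,S)=\int_0^1 \mathbf{1}\{K(t,S)>u\}\,du$, which writes $(K\mu)(S)=\int K(t,S)\,d\mu(t)$ as an average of indicator probabilities. Applying this to the hypotheses $\maxdiva{\eta}{f_{\theta_i}}{\tilde{f}_{\theta_i}}\le\lambda$ and $\maxdiva{\eta}{\tilde{f}_{\theta_j}}{f_{\theta_j}}\le\lambda$ gives the outer two links $(Kf_{\theta_i})(S)\le e^{\lambda}(K\tilde{f}_{\theta_i})(S)+\eta$ and $(K\tilde{f}_{\theta_j})(S)\le e^{\lambda}(Kf_{\theta_j})(S)+\eta$.

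The middle link is the distribution-privacy guarantee read off the approximate distributions. Since $\widetilde{\M}$ is obtained by the construction of $\M$ with $\tilde{f}_{\theta}$ substituted for $f_{\theta}$, the very argument that proves $\M$ is $(\epsilon,\delta)$-distribution private — an argument referring only to the (query) distributions handed to the mechanism, never to the genuine data law — shows $(K\tilde{f}_{\theta_i})(S)\le e^{\epsilon}(K\tilde{f}_{\theta_j})(S)+\delta$. Composing the three inequalities in order and simplifying yields $(Kf_{\theta_i})(S)\le e^{\epsilon+2\lambda}(Kf_{\theta_j})(S)+e^{\epsilon+\lambda}\eta+e^{\lambda}\delta+\eta$, which is exactly $\Pr(\widetilde{\M}(\data)\in S\given\theta_i)\le e^{\epsilon'}\Pr(\widetilde{\M}(\data)\in S\given\theta_j)+\delta'$ for $\epsilon'=\epsilon+2\lambda$ and $\delta'=(1+e^{\epsilon+\lambda})\eta+e^{\lambda}\delta$. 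Since the pair and the set $S$ were arbitrary, $\widetilde{\M}$ satisfies $(\epsilon',\delta')$-distribution privacy with respect to $\Psi$.

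The post-processing lemma and the bookkeeping when composing the three bounds are routine. The step that needs care — and the one I expect to be the main obstacle — is the middle link: one must argue that ``$\widetilde{\M}$ uses $\tilde{f}_{\theta}$ in place of $f_{\theta}$'' genuinely transfers the $(\epsilon,\delta)$ guarantee from $\M$ on data generated by $f_{\theta}$ to $\widetilde{\M}$ on data generated by $\tilde{f}_{\theta}$. This is precisely why the theorem isolates the hypothesis that the mechanism's output depends on the data only through $F(\data)$: it allows the privacy analysis to be phrased purely in terms of the distributions supplied to the mechanism, so substituting $\tilde{f}_{\theta}$ for $f_{\theta}$ throughout that analysis is legitimate.
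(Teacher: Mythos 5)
Your proof is correct and follows the same overall structure as the paper's: both sandwich the true output law between the approximate ones via the chain $\Pr(\widetilde{\M}_{f_{\theta_i}}\in S)\le e^{\lambda}\Pr(\widetilde{\M}_{\tilde{f}_{\theta_i}}\in S)+\eta\le e^{\lambda}\bigl(e^{\epsilon}\Pr(\widetilde{\M}_{\tilde{f}_{\theta_j}}\in S)+\delta\bigr)+\eta\le e^{\lambda}\bigl(e^{\epsilon}(e^{\lambda}\Pr(\widetilde{\M}_{f_{\theta_j}}\in S)+\eta)+\delta\bigr)+\eta$, and both take the middle link for granted on the grounds that the privacy analysis of $\M$ refers only to the query distributions it is handed. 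Where you differ is in how the outer two links are established. The paper integrates a pointwise density inequality $f_{\theta_i}(t)\le e^{\lambda}\tilde{f}_{\theta_i}(t)+\eta$ against the kernel $\Pr(\widetilde{\M}(\data)\in S\given F=t)$ and then bounds $\int_t\eta\,\Pr(\widetilde{\M}(\data)\in S\given F=t)\dd{t}\le\eta$; neither step is literally licensed by Definition \ref{def:maxdiv_approx}, which is a set-level condition, and the second requires the kernel to integrate to at most one over $t$, which need not hold. Your route --- the post-processing (data-processing) inequality for $\eta$-approximate max-divergence, proved via the layer-cake identity $(K\mu)(S)=\int_0^1\mu(\{t:K(t,S)>u\})\dd{u}$ so that the set-level bound $\mu(A)\le e^{\lambda}\nu(A)+\eta$ can be averaged over the superlevel sets --- delivers exactly the same two inequalities but from the stated hypotheses alone. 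So the decomposition and the arithmetic are the paper's, but your justification of the max-divergence transfer through the mechanism's kernel is the more rigorous of the two, and it is the right way to repair the paper's informal step.
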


We furthermore observe that the privacy loss can alternatively be controlled with a small amount of additional noise, using $\infty$-Wasserstein distance to quantify possible deviations between the assumed approximations and true distributions. 
\begin{restatable}{thm}{relaxwaslap}
\label{thm:relax_was_lap}
Suppose that the mechanism $\M(\data)=F(\data)+Z$  satisfies $(\epsilon,\delta)$-distribution privacy with respect to $\Psi$. For each $\theta\in\Theta$, let $\tilde{f}_{\theta}$ be an approximation of $f_{\theta}$ and let
$$W=\sup_{\theta\in\Theta}\Winf{f_{\theta}}{\tilde{f}_{\theta}}.$$
Let $Z'=(Z'_1,Z'_2,\dots, Z'_m)$, where $Z'_k\sim \Lap(W/\lambda)$ for each $k$ and let $\widetilde{\M}$ be the mechanism which applies $\M$ using the approximations $\tilde{f}_{\theta}$ in place of the true distributions $f_{\theta}$. Then the mechanism which outputs  $\widetilde{\M}(\data)+Z'$ satisfies  $(\epsilon',\delta')$-distribution privacy with respect to $\Psi$, where $\epsilon'=\epsilon+2\lambda$ and $\delta'=\exp(\lambda)\delta$. 
\end{restatable}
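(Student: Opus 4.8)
The plan is to bridge from the true query distributions $f_\theta$ to the approximations $\tilde{f}_\theta$ using a coupling extracted from the $\infty$-Wasserstein bound, to absorb the resulting mismatch into the extra Laplace noise $Z'$, and to invoke the privacy guarantee that $\widetilde{\M}$ enjoys ``for free'' when the data are assumed to follow the approximations. Fix a pair $(\theta_i,\theta_j)\in\Psi$. For each $\theta\in\Theta$, let $P_\theta$ be the distribution of the overall output $\widetilde{\M}(\data)+Z'$ when $\data\sim\theta$ and $F(\data)$ truly follows $f_\theta$, and let $\widetilde{P}_\theta$ be the distribution of $\widetilde{\M}(\data)+Z'$ under the counterfactual that $F(\data)\sim\tilde{f}_\theta$. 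Both the internal noise $\widetilde{Z}$ of $\widetilde{\M}$ and the added noise $Z'$ are independent of $\data$ given $F(\data)$, so $\widetilde{\M}(\data)+Z'$ is a randomized post-processing of $F(\data)$ alone. Since $\widetilde{\M}$ is constructed from the $\tilde{f}_\theta$ exactly as $\M$ is constructed from the $f_\theta$, the hypothesis that $\M$ is $(\epsilon,\delta)$-distribution private gives that $\widetilde{\M}$ is $(\epsilon,\delta)$-distribution private with respect to $\Psi$ under the assumption $F(\data)\sim\tilde{f}_\theta$; by the standard post-processing argument (adding the independent noise $Z'$), the same holds for $\widetilde{\M}(\data)+Z'$, i.e.\ $\widetilde{P}_{\theta_i}(S)\le e^{\epsilon}\widetilde{P}_{\theta_j}(S)+\delta$ for every measurable $S$.

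The key step is to show that $P_\theta$ and $\widetilde{P}_\theta$ are $(\lambda,0)$-indistinguishable in \emph{both} directions, for each $\theta\in\Theta$. Since $\Winf{f_\theta}{\tilde{f}_\theta}\le W$, choose a coupling $\gamma_\theta\in\Gamma(f_\theta,\tilde{f}_\theta)$ with $\|x-y\|_1\le W$ on its support (assuming the infimum in Definition~\ref{def:wasserstein_real} is attained; otherwise take $\|x-y\|_1\le W+\beta$ and send $\beta\downarrow 0$). Writing $\widetilde{\M}(\data)+Z'=F(\data)+\widetilde{Z}+Z'$ and using that the marginals of $\gamma_\theta$ are $f_\theta$ and $\tilde{f}_\theta$, we get, for any $S$,
\[
P_\theta(S)=\iint \Pr\!\big(x+\widetilde{Z}+Z'\in S\big)\,\gamma_\theta(x,y)\dd{x}\dd{y},\qquad
\widetilde{P}_\theta(S)=\iint \Pr\!\big(y+\widetilde{Z}+Z'\in S\big)\,\gamma_\theta(x,y)\dd{x}\dd{y}.
\]
Conditioning on $\widetilde{Z}$ (independent of $(x,y)$) and using that $Z'$ has i.i.d.\ $\Lap(W/\lambda)$ coordinates, the Laplace Mechanism bound with $L_1$-sensitivity $W$ gives $\Pr(x+w+Z'\in S)\le\exp(\lambda\|x-y\|_1/W)\,\Pr(y+w+Z'\in S)\le e^{\lambda}\Pr(y+w+Z'\in S)$ for every fixed $w$ and every $(x,y)$ in the support of $\gamma_\theta$. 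Integrating over $w$ and over $\gamma_\theta$ yields $P_\theta(S)\le e^{\lambda}\widetilde{P}_\theta(S)$, and the symmetric computation gives $\widetilde{P}_\theta(S)\le e^{\lambda}P_\theta(S)$; note there is no additive term here because the noise $Z'$ is exactly Laplace, which is why $\delta'$ acquires only a multiplicative factor.

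Finally I chain the three comparisons $P_{\theta_i}\to\widetilde{P}_{\theta_i}\to\widetilde{P}_{\theta_j}\to P_{\theta_j}$ using the standard composition rule for $(\epsilon,\delta)$-indistinguishability (if $A\le_{(\epsilon_1,\delta_1)}B$ and $B\le_{(\epsilon_2,\delta_2)}C$ then $A\le_{(\epsilon_1+\epsilon_2,\ \delta_1+e^{\epsilon_1}\delta_2)}C$): the first and third hops contribute $(\lambda,0)$ each and the middle hop contributes $(\epsilon,\delta)$, so altogether $P_{\theta_i}(S)\le e^{\epsilon+2\lambda}P_{\theta_j}(S)+e^{\lambda}\delta$, which is precisely $(\epsilon+2\lambda,\,e^{\lambda}\delta)$-distribution privacy with respect to $\Psi$.

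The step I expect to require the most care is the first one: making precise the sense in which $\widetilde{\M}$ ``inherits'' the $(\epsilon,\delta)$ guarantee with respect to the approximate family $\{\tilde{f}_\theta\}$ (this is exactly what makes the relaxation meaningful and mirrors the analogous assumption in Theorem~\ref{thm:other_maxdiva}), together with confirming closure of distribution privacy under data-independent post-processing. The coupling-plus-Laplace argument of the second step is routine, modulo the usual attainment caveat for $\infty$-Wasserstein couplings noted above.
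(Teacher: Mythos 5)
Your proposal is correct and follows essentially the same route as the paper's proof: the same three-hop chain from $f_{\theta_i}$ to $\tilde{f}_{\theta_i}$ to $\tilde{f}_{\theta_j}$ to $f_{\theta_j}$, with the inherited $(\epsilon,\delta)$ guarantee for the middle hop and the Wasserstein-coupling-plus-Laplace bound giving a multiplicative $e^{\lambda}$ factor for each outer hop. The only cosmetic difference is that you inline the coupling argument (and handle non-attainment of the infimum) where the paper invokes its Lemma on $\infty$-Wasserstein distance and max-divergence as a black box.
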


These results quantify how much protection can be ensured if modeling assumptions made by a domain expert do not exactly hold. Thus, they provide justification for the use of these assumptions when applying our mechanisms by providing fallback guarantees where necessary.

\begin{figure*}
    \centering
    \begin{subfigure}[t]{0.33\textwidth}
        \centering
        \includegraphics{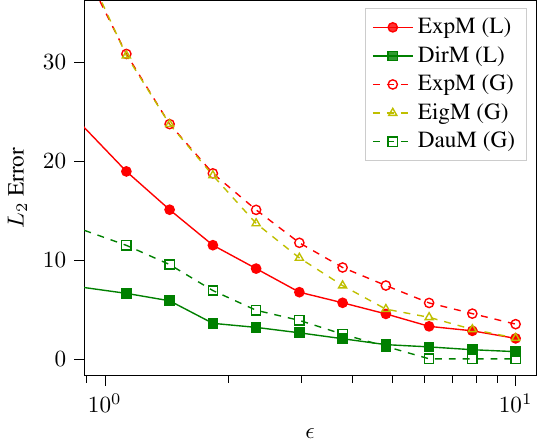}
        \label{fig:a_util_mech_b}
        \caption{Income, $\Delta p_I=0.1$.}
    \end{subfigure}
    \hfill
    \begin{subfigure}[t]{0.33\textwidth}
        \centering
        \includegraphics{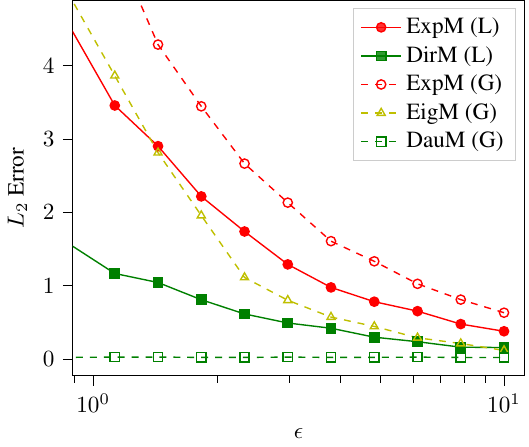}
        \label{fig:a_util_mech_d}
         \caption{Work class, $\Delta p_W=0.04$.}
    \end{subfigure}
    \hfill
    \begin{subfigure}[t]{0.33\textwidth}
        \centering
        \includegraphics{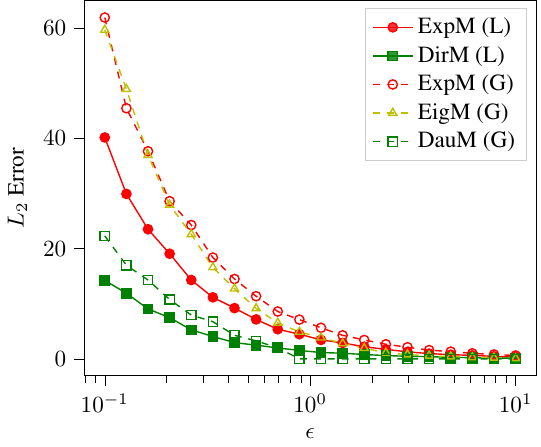}
        \label{fig:a_util_mech_c}
         \caption{Work class, $\Delta p_W=0.04$.}
    \end{subfigure}
    \caption{Privacy-utility tradeoffs of our mechanisms measured by $L_2$ error when (a) protecting income, $\Delta p_I=0.1$; and (b), (c) protecting work class, $\Delta p_W=0.04$, for varying $\epsilon$. For the Gaussian mechanisms, $\delta=0.001$.
    }
    \label{fig:a_util_mech}
\end{figure*}

\begin{figure*}[t]
    \centering
    \begin{subfigure}[t]{0.48\textwidth}
        \centering
        \includegraphics{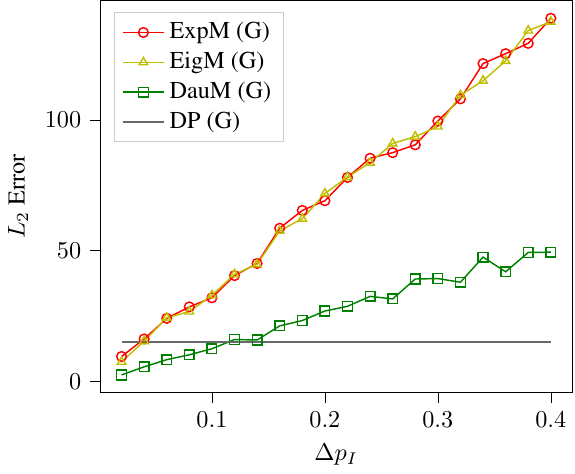}
        \caption{Income property $p_I$.}
        \vspace{.2in}
        \label{fig:a_util_pdiff_b}
    \end{subfigure}
    \begin{subfigure}[t]{0.48\textwidth}
        \centering
        \includegraphics{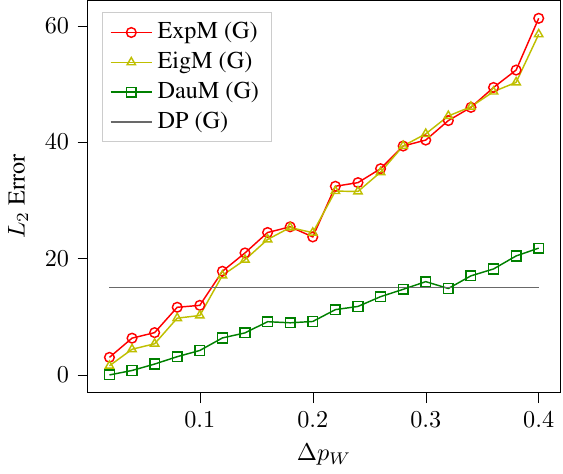}
        \caption{Work class property $p_W$.}
        \label{fig:a_util_pdiff_d}
    \end{subfigure}
    \caption{Error incurred by our mechanisms as $\Delta p$ varies, in comparison to the Gaussian Mechanism from differential privacy (DP). Here, $\epsilon=1$ and $\delta=0.001$.}
    \label{fig:a_util_pdiff}
\end{figure*}

\section{EXPERIMENTS}
\label{sec:exps}
We now evaluate our mechanisms on privacy-utility tradeoffs
and effectiveness against property inference attacks.\footnote{Our code for these experiments can be found at \url{https://github.com/mgcsls/mechanisms-global-properties}}

\subsection{Methodology}
We use the Adult dataset from the UCI Machine Learning repository \citep{dua_uci_2017}, which consists of records from the 1994 US Census database. We remove records with missing attributes, then for the remaining \num{45222} records we randomly allocate \num{10000}  as auxiliary data to implement attacks, \num{10000}  as testing data to evaluate the accuracy of the attacks, and the remainder for modeling data distributions to implement our mechanisms.

Our task is to release several statistics on a subset of 100 records, namely average age, average number of years of education, number of individuals who have never married, number of female individuals, and average hours worked per week.
We consider two possible sensitive properties global for a particular subset 
--- the proportion~${p_I}$ of individuals in the subset whose income exceeds~$\$50$K, and the proportion~${p_W}$ who are private sector workers.
For each, we want to prevent an attacker from determining whether $p$ has value $0.5-\Delta p/2$ or $0.5+\Delta p/2$, for $\Delta p \in [0.02, 0.4]$.

We evaluate five variants of the Expected Value Mechanism, summarized in Table \ref{tab:exp_mechanisms}. In order to apply these mechanisms, with the statistics of interest as the query $F$, we assume that the possible query distributions $f_{\theta}$ are approximately multivariate Gaussian with a  shared covariance matrix $\Sigma_{\theta_i}=\Sigma_{\theta_j}$ for all pairs $(\theta_i,\theta_j)\in\Psi$. We compute appropriate mean vectors $\mu_{\theta}$ and covariance matrices $\Sigma_{\theta}$ by randomly sampling 1000 subsets of 100 records from the Adult dataset with global sensitive property values as specified by $\theta$, with details in Appendix \ref{ap:more_results_modeling}.

We note that while these modeling steps may be difficult in practice, here they nonetheless allow initial evaluation of the utility and defense provided by our distribution privacy approach.

\begin{table}
\caption{
Mechanisms Implemented in the Evaluation.
}
\label{tab:exp_mechanisms}
\begin{center}
\begin{tabular}{ll}
\hline
\textbf{Variant of Expected Value Mechanism}                    & \textbf{Shorthand} \\ \hline \hline
Laplace (Section \ref{sec:basic_exp})  
& ExpM (L)        \\ \hline
Laplace, Directional (Section \ref{sec:dir_exp})
& DirM (L)         \\ \hline
Gaussian (Section \ref{sec:basic_exp})   
& ExpM (G)         \\ \hline
Gaussian, Eigenvector (Section \ref{sec:exp_gauss_variants})         
& EigM (G)         \\ \hline

\begin{tabular}[c]{@{}l@{}}Gaussian, Directional with\\ Adversarial Uncertainty (Section \ref{sec:exp_gauss_variants})  \end{tabular}         
& DauM (G)         \\ \hline
\end{tabular}
\end{center}
\end{table}

As a baseline, we use the Laplace and Gaussian Mechanisms from group differential privacy \citep{dwork_algorithmic_2014} applied with group size $k=n=100$, the size of the subsets queried. We note also that the Wasserstein Mechanism and Gaussian Mechanism of \citet{zhang_attribute_2020} are equivalent to ExpM (L) and ExpM (G) under our modeling assumptions. Thus, our experiments also provide insight into the practical impact of these mechanisms, which have not been evaluated on real datasets until now.

\subsection{Privacy-Utility Tradeoffs}
\label{sec:a_priv_util}
Figure \ref{fig:a_util_mech} shows privacy-utility tradeoffs as our mechanisms are applied to protect the income and work class properties to additive factors $\Delta p_I=0.1$ and $\Delta p_W=0.04$, respectively. Group differential privacy baselines are not included in this figure since they incurred error values an order of magnitude larger than the Expected Value Mechanism, as shown in Table \ref{tab:a_util} for the Gaussian variants. Note that all error values are averaged over 50 repetitions of sampling subsets of the Adult dataset, computing the statistics of interest, then applying each mechanism. 

\begin{table}
\caption{$L_2$ Error Incurred Protecting the Income Property to an Additive Factor of $\Delta p_I=0.1$. Here, $\delta=0.001$.}
\label{tab:a_util}
\begin{center}
\begin{tabular}{lccc}
\cline{2-4}
                   & \multicolumn{3}{c}{\textbf{$L_2$ Error}}      \\ \hline
\textbf{Mechanism} & $\epsilon=0.2$ & $\epsilon=1$ & $\epsilon=5$ \\ \hline\hline
ExpM (G)           & 177.28         & 34.98        & 7.11          \\
EigM (G)           & 175.65         & 34.87        & 4.89          \\
DauM (G)           & 69.85          & 13.40        & 1.24          \\ \hline
GroupDP (G)            & 7394.67        & 1539.93      & 293.17        \\ \hline
\end{tabular}
\end{center}
\end{table}

Figure \ref{fig:a_util_pdiff} furthermore shows the error incurred by our mechanisms as $\Delta p$ varies in comparison to the Gaussian Mechanism from differential privacy, noting that differential privacy does not provide  distribution privacy guarantees.
DauM incurs  less error than  differential privacy when protecting the income property to an additive factor of $\Delta p_I\le 0.12$ (e.g., preventing an attacker from distinguishing between data with 44\% and 56\% of individuals with income $>\$50$K), or protecting the work class property to an additive factor of $\Delta p_W\le 0.28$. This demonstrates an advantage of our mechanisms over differential privacy, as they are able to  scale noise differently according to the specific sensitive property of interest and degree to which it needs to be protected. However, it also shows that  these mechanisms may require considerable amounts of noise to protect sensitive properties to larger values of $\Delta p$, suggesting the need for alternative or additional domain specific analysis.

Nonetheless, Figure \ref{fig:a_util_mech} shows how directional assumptions can reduce the error incurred by our mechanisms. This illustrates the potential to significantly reduce overall error by essentially only applying noise to query components with high correlation with the sensitive property being protected. On the other hand, our optimizations using adversarial uncertainty (i.e., EigM, DauM) were generally only noticeable for relatively large $\epsilon$.
These optimizations may be more applicable when there is less correlation between the query and sensitive properties, formally justifying a statistical release without additional noise if inherent adversarial uncertainty is sufficient to provide privacy.

\subsection{Property Inference Attack}
\label{sec:attack}
We now apply our mechanisms against a property inference attack aiming to determine whether the sensitive property has value
{$p_1=0.45$ or $p_2=0.55$}, i.e., $\Delta p= 0.1$.
Our attack is based on the standard meta-classifier technique~\citep{ateniese_hacking_2015,zhang_dataset-level_2020}, with logistic regression as the meta-classifier, the query $F$ as training features and the sensitive property $p$ as labels. We generate training features and labels using the auxiliary data to sample 200 shadow datasets, each of size $n=100$ records with half of the sampled datasets satisfying $p=p_1$ and half satisfying $p=p_2$.
To evaluate meta-classifier accuracy, we use 200 sampled subsets of size $n=100$ from the testing data, half with $p_1$ and half with $p_2$. We  report the average meta-classifier accuracy over 50 repetitions.

\begin{figure} 
    \centering
    \begin{subfigure}[t]{0.30\textwidth}
        \centering
        \includegraphics{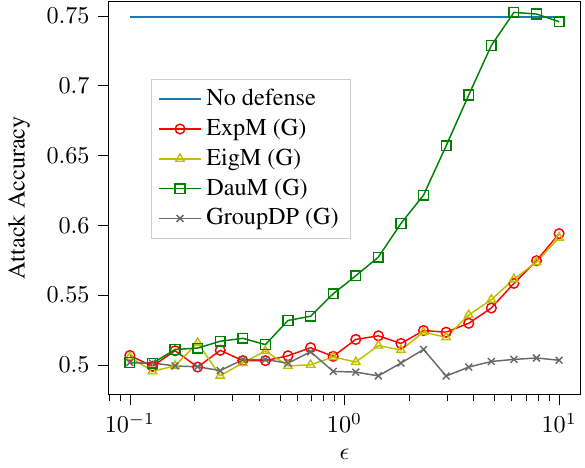}
        \vspace{-18pt}
    \end{subfigure}
    \hspace{32pt}
    \caption{
    Accuracy of attack determining whether the income property is $p_I=0.45$ or $p_I=0.55$.
    Defense mechanisms are applied with varying $\epsilon$ and $\delta=0.001$.
    }
    \label{fig:a_attack_accuracies_mech}
\end{figure}

For the income property $p_I$, the attack has 75\% success rate when statistics are released with no defense, while
Figure~\ref{fig:a_attack_accuracies_mech} shows that applying our mechanisms with $\epsilon=0.1$ reduces the attack accuracy to near $50\%$. Note that varying $\delta$ between 0.0001 and 0.01 had a much smaller effect on the attack accuracy than varying $\epsilon$ from 0.1 to 10, as we summarise and discuss in Appendix \ref{ap:more_attack}. Note also that the results for Laplace noise and for $p_W$ are omitted as they show similar trends.

Now, group differential privacy appears to provide the strongest defense out of all of the implemented mechanisms.
However, this is due to the excessive noise it applies, as was shown in Table \ref{tab:a_util}. In contrast, our mechanisms are more calibrated for $(\epsilon,\delta)$-distribution privacy with a trend of high to low defense as $\epsilon$ increases.

Finally, we remark that our mechanisms' empirical effectiveness against this attack does not necessarily guarantee defense capabilities against other attacks. Indeed, the need to defend against a range of attacks, including possibly not yet known attacks, highlights the importance of our mechanisms' formal guarantees. Nonetheless, these attack results provide an initial demonstration of our theoretical defenses translating to defense in practice.

\section{CONCLUDING REMARKS}

We presented an approach to protecting global properties of datasets using the distribution privacy framework and investigated two mechanisms for achieving distribution privacy --- the Wasserstein Mechanism, adapted from Pufferfish privacy, and the Expected Value Mechanism, generalizing the Gaussian Mechanism from attribute privacy.
Our experiments demonstrated that these mechanisms can reduce the accuracy of a property inference attack while providing significantly better utility than a group differential privacy baseline.

We believe that our work provides initial steps towards rigorously protecting global properties of datasets, leaving several open questions in this area. Future work could include generalizing our mechanisms to a wider range of  data distributions, designing new mechanisms under different data modeling assumptions, and evaluating mechanisms in a broader range of application scenarios.

\subsubsection*{Acknowledgements}
We thank the reviewers for their insightful  suggestions towards the improvement of this paper.

\bibliography{refs}

\appendix
\onecolumn

\section{ADVERSARIAL UNCERTAINTY RESULTS}
In this section, we provide proofs and further discussion for the results stated in Section \ref{sec:exp_gauss_variants}.
\label{ap:gauss_exp_proofs}
\gaussreducedmulti*
\begin{proof}
Since $f_{\theta}\sim\Gauss(\mu_{\theta}, \Sigma_{\theta})$ for each $\theta\in\Theta$, we can write each possible distribution of the query function in the form $F(\data)\given \theta=A_{\theta}Z+\mu_\theta$, where $Z$ is a standard normal vector and $A_{\theta}$ is an arbitrary matrix satisfying $A_{\theta}A_{\theta}^{\T}=\Sigma_{\theta}$. Then, for each $\theta\in\Theta$ and $x\in\R^m$,
\begin{align*}
    \Pr(F(\data)=x\given \theta)
    &=\Pr(A_\theta Z+\mu_\theta =x)\\
    &=\frac{1}{|A_{\theta}|}\Pr(Z=z_{\theta}),
\end{align*}
where $z_\theta=A_{\theta}^{-1}(x-\mu_{\theta})$.

Let $(\theta_i,\theta_j)\in\Psi$ be a pair of distributions. By assumption, we have $\Sigma_{\theta_i}=\Sigma_{\theta_j}$, thus, we may choose $A_{\theta_i}=A_{\theta_j}$ to be the same matrix. Then,
\begin{align*}
    \frac{\Pr(F(\data)=x\given \theta_i)}{\Pr(F(\data)=x\given \theta_j)}
    &=\frac{\frac{1}{|A_{\theta_i}|}\Pr(Z=z_{\theta_i})}{\frac{1}{|A_{\theta_j}|}\Pr(Z=z_{\theta_j})}\\
    &=\frac{\Pr(Z=z_{\theta_i})}{\Pr(Z=z_{\theta_i}+(z_{\theta_j}-z_{\theta_i}))}.
\end{align*}
Now, a standard normal distribution is spherically symmetric, so the distribution of $Z$ is independent of the orthonormal basis from which its components are drawn. Let us  fix an orthonormal basis $b_1,b_2,\dots,b_m$ such that $b_1$ is parallel to $z_{\theta_j}-z_{\theta_i}$. We may then express $Z$ in the form
\begin{align*}
    Z=\sum_{k=1}^{m}\lambda_k b_k,
\end{align*}
where $\lambda_k\sim\Gauss(0,1)$ for each $k$ are independently drawn. 
Writing $z_{\theta_i}=\sum_{k=1}^{m}\alpha_k b_k$, we then have
\begin{align*}
    \frac{\Pr\left(Z=z_{\theta_i}\right)}{\Pr\left(Z=z_{\theta_i}+(z_{\theta_j}-z_{\theta_i})\right)}
    &=\frac{\Pr\left(\lambda_1=\alpha_1\right)}{\Pr\left(\lambda_1=\alpha_1+\|z_{\theta_j}-z_{\theta_i}\|_2\right)}\prod_{k=2}^{m}\frac{\Pr\left(\lambda_k =\alpha_k\right)}{\Pr\left(\lambda_k =\alpha_k\right)}\\
    &=\exp\left(-\frac{1}{2}\left(\alpha_1^2-\left(\alpha_1+\|z_{\theta_j}-z_{\theta_i}\|_2\right)^2\right)\right)\\
    &=\exp\left(\alpha_1\|z_{\theta_j}-z_{\theta_i}\|_2+\frac{1}{2}\|z_{\theta_j}-z_{\theta_i}\|_2^2\right).
\end{align*}
The conditions of the theorem moreover give
\begin{align}
    (\mu_{\theta_i}-\mu_{\theta_j})^{\T} \Sigma_{\theta_i}^{-1}(\mu_{\theta_i}-\mu_{\theta_j})&\le (\epsilon/c)^2\nonumber\\
    \implies (\mu_{\theta_i}-\mu_{\theta_j})^{\T} (A_{\theta_i}A_{\theta_{i}}^{\T})^{-1}(\mu_{\theta_i}-\mu_{\theta_j})&\le (\epsilon/c)^2\nonumber\\
    \implies \|A_{\theta_{i}}^{-1}(\mu_{\theta_i}-\mu_{\theta_j})\|_2&\le \epsilon/c,\nonumber
\end{align}
thus, $\|z_{\theta_j}-z_{\theta_i}\|_2=\|A_{\theta_i}^{-1}(\mu_{\theta_j}-\mu_{\theta_i})\|_2\le \epsilon/c$. It follows that 
\begin{align*}
    \frac{\Pr\left(Z=z_{\theta_i}\right)}{\Pr\left(Z=z_{\theta_i}+(z_{\theta_j}-z_{\theta_i})\right)} \le \exp\left(\frac{|\alpha_1|\epsilon}{c} +\frac{\epsilon^2}{2c^2}\right).
\end{align*}
The above quantity is bounded by $\exp(\epsilon)$ whenever $|\alpha_1|\le c-\epsilon/2c$. We  now show that this occurs with probability at least $1-\delta$.

Since $\alpha_1\sim\Gauss(0,1)$, the standard Gaussian tail bound gives
\begin{align*}
    \Pr\left(|\alpha_1|>t\right)<\frac{\sqrt{2}\exp(-t^2/2)}{t\sqrt{\pi}}
\end{align*}
for $t>0$.  Letting $t=c-\epsilon/2c$, we would like to ensure that $\Pr(|\alpha_1|>t)<\delta$. It suffices to have
\begin{align*}
    \frac{\sqrt{2}\exp(-t^2/2)}{t\sqrt{\pi}}&<\delta\\
    \iff t\exp\left(\frac{t^2}{2}\right)&>\frac{\sqrt{2}}{\sqrt{\pi}\delta}\\
    \iff \ln(t)+\frac{t^2}{2}&>\ln\left(\frac{\sqrt{2}}{\sqrt{\pi}\delta}\right).
\end{align*}
We may assume that $\epsilon\le 1$ and $c=\sqrt{2\ln(1.25/\delta)}\ge 3/2$, so that the first term can be bounded by $\ln(t)=\ln(c-\epsilon/2c)\ge \ln(7/6)$. For the second term, $\epsilon\le 1$ implies that
\begin{align*}
    \frac{t^2}{2}
    =\frac{1}{2}\left(c-\frac{\epsilon}{2c}\right)^2
    =\frac{1}{2}\left(c^2-\epsilon+\frac{\epsilon^2}{4c^2}\right)\ge \frac{c^2-1}{2}.
\end{align*}
Thus, it suffices to have
\begin{align*}
    \ln\left(\frac{7}{6}\right)+\frac{c^2-1}{2}&\ge \ln\left(\frac{\sqrt{2}}{\sqrt{\pi}\delta}\right)\\
    \iff c^2 &\ge 2\ln\left(\frac{6\sqrt{2}\exp(1/2)}{7\sqrt{\pi}\delta}\right)
\end{align*}
which holds for $c=\sqrt{2\ln(1.25/\delta)}$ since $6\sqrt{2}\exp(1/2)/7\sqrt{\pi}<1.25$. Thus, $\Pr(|\alpha_1|>c-\epsilon/2c)<\delta$.

To conclude the proof, write each $x\in\R^m$ in the form $x=A_{\theta_i}\sum_{k=1}^{m}\alpha_k b_k+\mu_{\theta_i}$ and partition $\R^m$ as $\R^m=R_1\cup R_2$, where
\begin{align*}
    R_1&=\left\{x:|\alpha_1|\le c-\epsilon/2c\right\}\\
    R_2&=\left\{x:|\alpha_1|> c-\epsilon/2c\right\}.
\end{align*}
Then, $\Pr(F(\data)=x\given\theta_i)\le \exp(\epsilon)\Pr(F(\data)=x\given\theta_j)$ for all $x\in R_1$, and $\Pr(F(\data)\in R_2\given \theta_i)<\delta$. Fix an arbitrary subset $S\subseteq \R^m$, and define $S_1=S\cap R_1$ and $S_2=S\cap R_2$. We have
\begin{align*}
    \Pr(F(\data)\in S\given \theta_i)
    &=\Pr(F(\data)\in S_1\given \theta_i)+\Pr(F(\data)\in S_2\given \theta_i)\\
    &\le \exp(\epsilon)\Pr(F(\data)\in S_1\given \theta_j)+\delta\\
    &\le \exp(\epsilon)\Pr(F(\data)\in S\given \theta_j)+\delta.
\end{align*}
If follows that $\M$ satisfies  $(\epsilon,\delta)$-distribution privacy with respect to $\Psi$.
\end{proof}
Theorem \ref{thm:gauss_reduced_multi} gives a means of estimating the amount of privacy that can be guaranteed only using adversarial uncertainty in the distributions $f_\theta$ for each $\theta\in\Theta$. However, additional noise may be needed to ensure a desirable level of privacy, especially in cases where there may  not be much randomness inherent in the query. In such cases, we can add Gaussian noise to  manipulate the covariance matrices $\Sigma_{\theta}$ to satisfy the conditions required by the theorem.
\begin{corollary}
\label{cor:gauss_multi_reduced}
Suppose that $f_{\theta}\sim\Gauss(\mu_{\theta}, \Sigma_\theta)$  for each $\theta\in\Theta$ and that $\Sigma_{\theta_i}=\Sigma_{\theta_j}$ for each $(\theta_i,\theta_j)\in\Psi$. Let $c=\sqrt{2\ln(1.25/\delta)}$. Then, the mechanism $\M$ which outputs $F(\data)+Z$, where $Z\sim\Gauss(0,\Sigma)$, satisfies $(\epsilon,\delta)$-distribution privacy with respect to $\Psi$ as long as
\begin{align*}
    (\mu_{\theta_i}-\mu_{\theta_j})^{\T} (\Sigma_{\theta_i}+\Sigma)^{-1}(\mu_{\theta_i}-\mu_{\theta_j})\le (\epsilon/c)^2
\end{align*}
for all $(\theta_i,\theta_j)\in\Psi$.
\end{corollary}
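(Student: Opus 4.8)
The plan is to reduce this corollary directly to Theorem \ref{thm:gauss_reduced_multi} by absorbing the added Gaussian noise into the query distribution via closure of Gaussians under convolution. Fix $\theta\in\Theta$ and note that when $\data\sim\theta$ the output $\M(\data)=F(\data)+Z$ is a sum of the independent Gaussian vectors $F(\data)\sim\Gauss(\mu_\theta,\Sigma_\theta)$ and $Z\sim\Gauss(0,\Sigma)$, so $\M(\data)\given\theta\sim\Gauss(\mu_\theta,\Sigma_\theta+\Sigma)$. Writing $g_\theta=\Gauss(\mu_\theta,\Sigma_\theta+\Sigma)$, we have $\Pr(\M(\data)\in S\given\theta)=\Pr(g_\theta\in S)$ for every measurable $S$. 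Thus $\M$ is exactly a noiseless release whose conditional output law is the family $\{g_\theta\}_{\theta\in\Theta}$, and it suffices to verify that this family satisfies the hypotheses of Theorem \ref{thm:gauss_reduced_multi} with $\Sigma_\theta$ there replaced by $\Sigma_\theta+\Sigma$.

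First I would check the equal-covariance condition across pairs: for any $(\theta_i,\theta_j)\in\Psi$ the hypothesis $\Sigma_{\theta_i}=\Sigma_{\theta_j}$ gives $\Sigma_{\theta_i}+\Sigma=\Sigma_{\theta_j}+\Sigma$, so the covariances of $g_{\theta_i}$ and $g_{\theta_j}$ agree, as required. Next, the spectral condition required by Theorem \ref{thm:gauss_reduced_multi}, namely $(\mu_{\theta_i}-\mu_{\theta_j})^{\T}(\Sigma_{\theta_i}+\Sigma)^{-1}(\mu_{\theta_i}-\mu_{\theta_j})\le(\epsilon/c)^2$ with $c=\sqrt{2\ln(1.25/\delta)}$, is precisely the standing assumption of the corollary. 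Invertibility of $\Sigma_{\theta_i}+\Sigma$, implicitly needed, is inherited from the setting of Theorem \ref{thm:gauss_reduced_multi} since $\Sigma_{\theta_i}+\Sigma\succeq\Sigma_{\theta_i}\succ 0$.

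Then I would invoke Theorem \ref{thm:gauss_reduced_multi} applied to the conditional laws $g_\theta$ with no additional noise: its proof only uses the conditional output densities $\Pr(\M(\data)=x\given\theta)$, which here equal the density of $\Gauss(\mu_\theta,\Sigma_\theta+\Sigma)$ at $x$; factoring $\M(\data)\given\theta = A'_\theta Z' + \mu_\theta$ with $A'_\theta(A'_\theta)^{\T}=\Sigma_\theta+\Sigma$ and $Z'$ standard normal, the argument transfers verbatim and yields $\Pr(\M(\data)\in S\given\theta_i)\le e^{\epsilon}\Pr(\M(\data)\in S\given\theta_j)+\delta$ for all $(\theta_i,\theta_j)\in\Psi$ and all $S$, i.e.\ $(\epsilon,\delta)$-distribution privacy with respect to $\Psi$.

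I do not anticipate a genuine obstacle; the only care needed is the bookkeeping observation that adding the \emph{same} matrix $\Sigma$ to both $\Sigma_{\theta_i}$ and $\Sigma_{\theta_j}$ preserves the equal-covariance requirement for every pair in $\Psi$ — so the output family stays within the regime handled by Theorem \ref{thm:gauss_reduced_multi} — together with noting that the invertibility hypothesis is not lost under this perturbation. An equivalent packaging, if one prefers to avoid reopening the internals of the earlier proof, is to apply Theorem \ref{thm:gauss_reduced_multi} to the (randomized) map $\data\mapsto F(\data)+Z$ viewed as the query, whose conditional distributions are the $g_\theta$, with the identity as the trailing mechanism.
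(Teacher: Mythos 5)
Your proposal is correct and matches the paper's own proof: both absorb the independent Gaussian noise $Z$ into the query via closure under convolution, so that $F(\data)+Z\given\theta\sim\Gauss(\mu_\theta,\Sigma_\theta+\Sigma)$, and then apply Theorem~\ref{thm:gauss_reduced_multi} to this modified query. The additional bookkeeping you supply (preservation of the equal-covariance condition and invertibility) is implicit in the paper's one-line argument.
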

\begin{proof}
Since $f_{\theta}\sim\Gauss(\mu_{\theta}, \Sigma_\theta)$ and $Z\sim\Gauss(0,\Sigma)$ are independent, we have that $F(\data)+Z\sim \Gauss(\mu_{\theta}, \Sigma_\theta+\Sigma)$ for each $\data\sim\theta\in\Theta$. 
The modified query $F(\data)+Z$ then satisfies the conditions of Theorem \ref{thm:gauss_reduced_multi} and the result follows.
\end{proof}
Alternatively, we can use the following set of sufficient conditions expressed in terms of the worst case $L_2$ distance between expected values, $\sensEt{\Psi}{F}$, defined  in Section \ref{sec:basic_exp}.
\begin{corollary}
\label{cor:gauss_eigen}
Suppose that $f_{\theta}\sim\Gauss(\mu_{\theta}, \Sigma_\theta)$  for each $\theta\in\Theta$ and that $\Sigma_{\theta_i}=\Sigma_{\theta_j}$ for each $(\theta_i,\theta_j)\in\Psi$. Let $c=\sqrt{2\ln(1.25/\delta)}$. Then, the mechanism $\M$ which outputs $F(\data)+Z$, where $Z\sim\Gauss(0,\Sigma)$, satisfies $(\epsilon,\delta)$-distribution privacy with respect to $\Psi$ as long as
\begin{align*}
    \Sigma_{\theta}+\Sigma \ge (c \sensEt{\Psi}{F}/\epsilon)^2I
\end{align*}
for all $\theta\in\Theta$. Equivalently, $\M$ satisfies  $(\epsilon,\delta)$-distribution privacy with respect to $\Psi$ as long as the minimum eigenvalue of $\Sigma_{\theta}+\Sigma$ is at least $(c \sensEt{\Psi}{F}/\epsilon)^2$ for all $\theta\in\Theta$.
\end{corollary}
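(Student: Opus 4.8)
The plan is to derive the corollary directly from Corollary~\ref{cor:gauss_multi_reduced}, by showing that the stated positive-semidefiniteness condition is sufficient to imply the quadratic-form bound required there. First I would dispose of the degenerate case: if $\sensEt{\Psi}{F}=0$, then $\E[f_{\theta_i}]=\E[f_{\theta_j}]$ for every $(\theta_i,\theta_j)\in\Psi$, which together with $\Sigma_{\theta_i}=\Sigma_{\theta_j}$ forces $f_{\theta_i}=f_{\theta_j}$, so $(\epsilon,\delta)$-distribution privacy holds trivially for any noise $Z$. Hence I may assume $\sensEt{\Psi}{F}>0$, in which case $(c\,\sensEt{\Psi}{F}/\epsilon)^2>0$ and the hypothesis $\Sigma_{\theta}+\Sigma\ge (c\,\sensEt{\Psi}{F}/\epsilon)^2 I$ makes every $\Sigma_{\theta}+\Sigma$ positive definite, in particular invertible, so that the quadratic forms appearing in Corollary~\ref{cor:gauss_multi_reduced} are well defined.

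Next, I would fix a pair $(\theta_i,\theta_j)\in\Psi$ and set $u=\mu_{\theta_i}-\mu_{\theta_j}$. Since $\E[f_{\theta}]=\mu_{\theta}$ for a Gaussian, the definition of $\sensEt{\Psi}{F}$ gives $\|u\|_2\le\sensEt{\Psi}{F}$. Using the operator-antimonotonicity of matrix inversion on positive definite matrices (if $A\ge B$ with $B$ positive definite then $B^{-1}\ge A^{-1}$), the hypothesis $\Sigma_{\theta_i}+\Sigma\ge (c\,\sensEt{\Psi}{F}/\epsilon)^2 I$ yields $(\Sigma_{\theta_i}+\Sigma)^{-1}\le (\epsilon/(c\,\sensEt{\Psi}{F}))^2 I$. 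Therefore
\[
(\mu_{\theta_i}-\mu_{\theta_j})^{\T}(\Sigma_{\theta_i}+\Sigma)^{-1}(\mu_{\theta_i}-\mu_{\theta_j})\le\left(\frac{\epsilon}{c\,\sensEt{\Psi}{F}}\right)^2\|u\|_2^2\le\left(\frac{\epsilon}{c}\right)^2,
\]
and since this bound holds for every pair in $\Psi$, Corollary~\ref{cor:gauss_multi_reduced} immediately gives that $\M(\data)=F(\data)+Z$ with $Z\sim\Gauss(0,\Sigma)$ satisfies $(\epsilon,\delta)$-distribution privacy with respect to $\Psi$.

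Finally, for the ``equivalently'' clause I would note that $\Sigma_{\theta}+\Sigma$ is symmetric, so $\Sigma_{\theta}+\Sigma\ge (c\,\sensEt{\Psi}{F}/\epsilon)^2 I$ holds if and only if every eigenvalue of $\Sigma_{\theta}+\Sigma$ is at least $(c\,\sensEt{\Psi}{F}/\epsilon)^2$, i.e.\ if and only if its minimum eigenvalue is at least that value. The only slightly delicate points are the reversal of the matrix ordering under inversion and the bookkeeping around invertibility of $\Sigma_{\theta}+\Sigma$ (handled by the reduction to $\sensEt{\Psi}{F}>0$); everything else is a direct consequence of Corollary~\ref{cor:gauss_multi_reduced} and the definition of $\sensEt{\Psi}{F}$, so I do not expect any real obstacle.
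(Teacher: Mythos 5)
Your proposal is correct and follows essentially the same route as the paper's proof: reduce to $\sensEt{\Psi}{F}>0$, invert the Loewner inequality to get $(\Sigma_{\theta}+\Sigma)^{-1}\le(\epsilon/c\,\sensEt{\Psi}{F})^2 I$, bound the quadratic form by $(\epsilon/c)^2$, and invoke Corollary~\ref{cor:gauss_multi_reduced}. Your explicit treatment of the degenerate case $\sensEt{\Psi}{F}=0$ is a small addition the paper leaves implicit, but it does not change the argument.
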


\begin{proof}
We may assume that $\sensEt{\Psi}{F}>0$, so that we have $(c \sensEt{\Psi}{F}/\epsilon)^2I>0$. Then, using properties of positive definite matrices, we have
\begin{align*}
    \Sigma_{\theta}+\Sigma &\ge (c \sensEt{\Psi}{F}/\epsilon)^2I\\
    \implies (\Sigma_{\theta}+\Sigma)^{-1}&\le (\epsilon/c \sensEt{\Psi}{F})^2I.
\end{align*}
Let $(\theta_i,\theta_j)\in\Psi$. By definition of $\sensEt{\Psi}{F}$, we have $\|\mu_{\theta_i}-\mu_{\theta_j}\|_2\le \sensEt{\Psi}{F}$. Thus,
\begin{align*}
    (\mu_{\theta_i}-\mu_{\theta_j})^{\T} (\Sigma_{\theta_i}+\Sigma)^{-1}(\mu_{\theta_i}-\mu_{\theta_j})
    &\le (\mu_{\theta_i}-\mu_{\theta_j})^{\T}((\epsilon/c \sensEt{\Psi}{F})^2I)(\mu_{\theta_i}-\mu_{\theta_j})\\
    &=(\epsilon/c \sensEt{\Psi}{F})^2\|\mu_{\theta_i}-\mu_{\theta_j}\|_2^2\\
    &\le(\epsilon/c)^2.
\end{align*}
The conditions of Corollary \ref{cor:gauss_multi_reduced} are thus satisfied, and it follows that $\M$ satisfies $(\epsilon,\delta)$-distribution privacy with respect to $\Psi$.
\end{proof}
\subsection{Eigenvector Gaussian Mechanism}
\label{ap:eig_gauss}
\gausseig*
\begin{proof}
Let $\theta\in\Theta$. By assumption, $v_1,v_2,\dots, v_m$ are the normalized eigenvectors of $\Sigma_\theta$, and since $\Sigma_{\theta}$ is symmetric, these eigenvectors are orthogonal. We may thus write $\Sigma_\theta=\sum_{k}\lambda_{\theta,k}^{2}v_{k}v_{k}^{\T}$, where $\lambda_{\theta,k}^2=v_k^{\T}\Sigma_{\theta}v_k$ is the  eigenvalue of $\Sigma_\theta$ corresponding to the eigenvector $v_k$. Then, we have
\begin{align*}
    \Sigma_\theta+\Sigma&=\sum_{k}(\lambda_{\theta,k}^2+\sigma_k^2)v_{k}v_{k}^{\T}\\
    \implies (\Sigma_\theta+\Sigma)v_{r}&=\sum_{k}(\lambda_{\theta,k}^2+\sigma_k^2)v_{k}v_{k}^{\T}v_{r}\\
    &=(\lambda_{\theta,r}^2+\sigma_r^2)v_{r},
\end{align*}
for each $1\le r\le m$. 
The eigenvalues of $\Sigma_\theta+\Sigma$ are thus exactly the values $\lambda_{\theta,k}^2+\sigma_k^2$. Now, from our choice of $\sigma_k$,
\begin{align*}
    \lambda_{\theta,k}^2+\sigma_k^2\ge \lambda_{\theta,k}^2+\sigma_{\theta,k}^2\ge (c\sensEt{\Psi}{F}/\epsilon)^2,
\end{align*}
so that the minimum eigenvalue of $\Sigma_\theta+\Sigma$ is at least $(c\sensEt{\Psi}{F}/\epsilon)^2$. The conditions of Corollary \ref{cor:gauss_eigen} are thus  satisfied, and it follows that $\M$ satisfies $(\epsilon,\delta)$-distribution privacy with respect to $\Psi$.
\end{proof}

\subsection{Adversarial Uncertainty with Directional Assumptions}
\label{ap:dir_gauss}
We briefly discuss how directional assumptions (Section \ref{sec:dir_exp}) can be used in conjunction with adversarial uncertainty (Section \ref{sec:exp_gauss_variants}). Essentially, we apply noise in the direction of $\mu_{\theta_i}-\mu_{\theta_j}$ while also taking into account the variance of each distribution $f_\theta$. In doing so, we produce a variant of the Expected Value Mechanism which uses  directional  and adversarial uncertainty assumptions (DauM), described in Algorithm \ref{alg:Gaussian_Directional}. We present the mechanism for the case where the differences of mean vectors $\mu_{\theta_i}-\mu_{\theta_j}$ are all parallel to some vector $v$. As for the Directional Expected Value Mechanism (Section \ref{sec:dir_exp}) and Eigenvector Gaussian Mechanism (Section \ref{sec:exp_gauss_variants}), this mechanism can also be modified to account for cases where this assumption may not exactly hold.
\begin{algorithm}
\SetAlgoNoLine
Set  $c=\sqrt{2\ln(1.25/\delta)}$.\\
\For{each $\theta\in\Theta$}{
    Set $\mu_{\theta}=\E[F(\data)\given\data\sim\theta]$, $\Sigma_{\theta}=\Cov(F(\data)\given\data\sim\theta)$.
}
\For{each $(\theta_{i}, \theta_{j})\in\Psi$}{
    Set $\alpha_{i,j}=(\mu_{\theta_i}-\mu_{\theta_j})^{\T} v$.\\
    Find $\sigma_{i,j}^2>0$ such that $\Sigma_{\theta_i}+(\sigma_{i,j}^2-(\alpha_{i,j}c/\epsilon)^2)vv^{\T}>0$. \label{algline:dgm:7}
  }
Set $\sigma^2=\max_{(\theta_i,\theta_j)\in\Psi}\sigma_{i,j}^2$.\\
\Return{
$F(\data) + Z$
}, where $Z\sim\Gauss(0,\sigma^2 vv^{\T})$.
\caption{DauM (dataset $\data$, query $F$, distribution pairs $\Psi\subseteq\Theta\times\Theta$, privacy parameters $\epsilon$, $\delta$, unit vector $v$)}
\label{alg:Gaussian_Directional}
\end{algorithm}

To prove that Algorithm \ref{alg:Gaussian_Directional} satisfies distribution privacy, we will use the following lemma.
\begin{lemma}
\label{lem:sig_inv}
Let $v$ be a vector and let $\Sigma>0$. Suppose that $\Sigma v = \lambda v$. Then, $v^{\T}\Sigma^{-1}v=\lambda^{-1}\|v\|_2^2$.
\end{lemma}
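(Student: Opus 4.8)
The plan is to exploit the fact that an eigenvector of $\Sigma$ is automatically an eigenvector of $\Sigma^{-1}$ with the reciprocal eigenvalue, and then simply evaluate the quadratic form. First I would note that since $\Sigma>0$, it is invertible and all of its eigenvalues are strictly positive; in particular $\lambda>0$ (the case $v=0$ makes the claim trivially $0=0$, so assume $v\neq 0$, whence $\lambda$ is a genuine eigenvalue). Applying $\Sigma^{-1}$ to both sides of $\Sigma v=\lambda v$ and dividing by $\lambda$ gives $\Sigma^{-1}v=\lambda^{-1}v$.

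Next I would substitute this into the quadratic form: $v^{\T}\Sigma^{-1}v = v^{\T}(\lambda^{-1}v) = \lambda^{-1}(v^{\T}v) = \lambda^{-1}\|v\|_2^2$, which is the desired identity.

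There is essentially no obstacle here — the only point requiring a word of care is justifying $\lambda\neq 0$, which is immediate from positive definiteness of $\Sigma$ (equivalently, $v^{\T}\Sigma v=\lambda\|v\|_2^2>0$ forces $\lambda>0$ when $v\neq 0$). Everything else is a one-line algebraic manipulation, so the proof will be only a few lines long.
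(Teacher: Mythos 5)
Your proof is correct and follows essentially the same route as the paper's: apply $\Sigma^{-1}$ to the eigenvector equation to get $\Sigma^{-1}v=\lambda^{-1}v$, then evaluate the quadratic form. Your extra remark justifying $\lambda>0$ via positive definiteness is a minor (and welcome) addition that the paper leaves implicit.
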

\begin{proof}
We have
\begin{align*}
    \Sigma v = \lambda v
    &\implies \Sigma^{-1} v = \lambda^{-1}v\\
    &\implies v^{T}\Sigma^{-1} v = \lambda^{-1}v^{\T}v=\lambda^{-1}\|v\|_2^2.
\end{align*}
\end{proof}
\begin{theorem}
\label{thm:gauss_dir}
Suppose that $f_{\theta}\sim\Gauss(\mu_{\theta}, \Sigma_\theta)$  for each $\theta\in\Theta$ and that $\Sigma_{\theta_i}=\Sigma_{\theta_j}$ for each $(\theta_i,\theta_j)\in\Psi$. Suppose furthermore that, for each $(\theta_i,\theta_j)\in \Psi$, the vector $\mu_{\theta_i}-\mu_{\theta_j}$ is parallel to the unit vector $v$. Then, the mechanism $\M$ described in Algorithm \ref{alg:Gaussian_Directional} satisfies $(\epsilon,\delta)$-distribution privacy with respect to $\Psi$. 
\end{theorem}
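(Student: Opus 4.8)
The plan is to derive Theorem~\ref{thm:gauss_dir} from Corollary~\ref{cor:gauss_multi_reduced}, with the additive noise covariance taken to be the rank-one matrix $\Sigma=\sigma^2 vv^{\T}$, where $\sigma^2=\max_{(\theta_i,\theta_j)\in\Psi}\sigma_{i,j}^2$ is the quantity computed by Algorithm~\ref{alg:Gaussian_Directional}. Since $F(\data)\given\theta\sim\Gauss(\mu_\theta,\Sigma_\theta)$ is independent of the noise $Z\sim\Gauss(0,\sigma^2 vv^{\T})$, the mechanism's output is distributed as $\Gauss(\mu_\theta,\Sigma_\theta+\sigma^2 vv^{\T})$ for $\data\sim\theta$, with a common covariance $\Sigma_{\theta_i}+\sigma^2 vv^{\T}=\Sigma_{\theta_j}+\sigma^2 vv^{\T}$ on each pair. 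It therefore suffices to verify the hypothesis of Corollary~\ref{cor:gauss_multi_reduced}: that $\Sigma_{\theta_i}+\sigma^2 vv^{\T}$ is positive definite and that
\[
  (\mu_{\theta_i}-\mu_{\theta_j})^{\T}(\Sigma_{\theta_i}+\sigma^2 vv^{\T})^{-1}(\mu_{\theta_i}-\mu_{\theta_j})\le(\epsilon/c)^2
\]
for every pair $(\theta_i,\theta_j)\in\Psi$.

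Fix such a pair. By hypothesis $\mu_{\theta_i}-\mu_{\theta_j}$ is parallel to the unit vector $v$, so $\mu_{\theta_i}-\mu_{\theta_j}=\alpha_{i,j}v$ with $\alpha_{i,j}=(\mu_{\theta_i}-\mu_{\theta_j})^{\T}v$, which is exactly the scalar used in Algorithm~\ref{alg:Gaussian_Directional}. If $\alpha_{i,j}=0$ the left-hand side is $0$ and there is nothing to prove, so assume $\alpha_{i,j}\neq 0$ and set $\tau=(\alpha_{i,j}c/\epsilon)^2>0$ and $M=\Sigma_{\theta_i}+\sigma^2 vv^{\T}$. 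The quadratic form above then equals $\alpha_{i,j}^2\,v^{\T}M^{-1}v$, so the inequality we must establish is equivalent to the scalar bound $v^{\T}M^{-1}v\le 1/\tau$.

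The crux is to recognise this bound as precisely the positive-definiteness requirement enforced in Line~\ref{algline:dgm:7}. For positive definite $M$ and $\tau>0$, a Schur-complement argument — equivalently, conjugating the matrix inequality by $M^{-1/2}$ and evaluating the result at $M^{-1/2}v$ — shows that $v^{\T}M^{-1}v\le 1/\tau$ holds whenever $M-\tau vv^{\T}\ge 0$, i.e. whenever $\Sigma_{\theta_i}+(\sigma^2-\tau)vv^{\T}\ge 0$. Now Line~\ref{algline:dgm:7} guarantees $\Sigma_{\theta_i}+(\sigma_{i,j}^2-\tau)vv^{\T}>0$, and since $\sigma^2\ge\sigma_{i,j}^2$ and $vv^{\T}\ge 0$ we obtain $\Sigma_{\theta_i}+(\sigma^2-\tau)vv^{\T}\ge\Sigma_{\theta_i}+(\sigma_{i,j}^2-\tau)vv^{\T}>0$; the same chain (using $\tau\ge 0$) also gives $M=\Sigma_{\theta_i}+\sigma^2 vv^{\T}>0$, so $M^{-1}$ is well defined. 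Hence the required inequality holds on every pair, and Corollary~\ref{cor:gauss_multi_reduced} yields that $\M$ satisfies $(\epsilon,\delta)$-distribution privacy with respect to $\Psi$.

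I expect the only delicate point to be the matrix-inequality equivalence $v^{\T}M^{-1}v\le 1/\tau \Leftrightarrow M\ge\tau vv^{\T}$ and the bookkeeping that matches it to Line~\ref{algline:dgm:7}; Lemma~\ref{lem:sig_inv} streamlines the computation of $v^{\T}M^{-1}v$ in the special case that $v$ is an eigenvector of $M$, but the Schur-complement route handles the general case cleanly. Everything else — the reduction to Corollary~\ref{cor:gauss_multi_reduced}, the substitution $\mu_{\theta_i}-\mu_{\theta_j}=\alpha_{i,j}v$, and the monotonicity $\sigma^2\ge\sigma_{i,j}^2$ — is routine.
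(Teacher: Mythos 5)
Your proof is correct and follows essentially the same route as the paper's: both reduce to Corollary~\ref{cor:gauss_multi_reduced} with $\Sigma=\sigma^2vv^{\T}$, substitute $\mu_{\theta_i}-\mu_{\theta_j}=\alpha_{i,j}v$, and verify $v^{\T}(\Sigma_{\theta_i}+\sigma^2vv^{\T})^{-1}v\le(\epsilon/c\alpha_{i,j})^2$ from the condition enforced on line~\ref{algline:dgm:7} together with $\sigma^2\ge\sigma_{i,j}^2$. The only difference is the final linear-algebra step, where you use the equivalence $v^{\T}M^{-1}v\le 1/\tau\iff M\ge\tau vv^{\T}$ via conjugation by $M^{-1/2}$, whereas the paper perturbs by $\eta I$ and applies Lemma~\ref{lem:sig_inv}; your version also handles the degenerate case $\alpha_{i,j}=0$ explicitly, which the paper glosses over.
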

\begin{proof}
Let $(\theta_i,\theta_j)\in\Psi$.
By Corollary \ref{cor:gauss_multi_reduced}, it suffices to show that
\begin{align}
    (\mu_{\theta_i}-\mu_{\theta_j})^{\T} (\Sigma_{\theta_i}+\sigma^2 vv^{\T})^{-1}(\mu_{\theta_i}-\mu_{\theta_j})\le (\epsilon/c)^2.
\label{eq:dir1}
\end{align}
By assumption, $\mu_{\theta_i}-\mu_{\theta_j}$ is parallel to $v$, hence  $\mu_{\theta_i}-\mu_{\theta_j}=\alpha_{i,j} v$, where $\alpha_{i,j}=(\mu_{\theta_i}-\mu_{\theta_j})^{\T} v$. Then,
\begin{align*}
    (\mu_{\theta_i}-\mu_{\theta_j})^{\T} (\Sigma_{\theta_i}+\sigma^2 vv^{\T})^{-1}(\mu_{\theta_i}-\mu_{\theta_j})
    =\alpha_{i,j}^2v^{\T}(\Sigma_{\theta_i}+\sigma^2 vv^{\T})^{-1}v.
\end{align*}
We would thus like to show that $v^{\T}(\Sigma_{\theta_i}+\sigma^2 vv^{\T})^{-1}v\le (\epsilon/c\alpha_{i,j})^2$. Now, from our choice of $\sigma^2$ and using properties of positive semi-definite matrices, we have
\begin{align*}
    \Sigma_{\theta_i}+\sigma^2 vv^{\T}\ge \Sigma_{\theta_i}+\sigma_{i,j}^2 vv^{\T}>(\alpha_{i,j}c/\epsilon)^2vv^{\T}.
\end{align*}
Thus, there exists some $\eta>0$ such that $\Sigma_{\theta_i}+\sigma^2 vv^{\T}-(\alpha_{i,j}c/\epsilon)^2vv^{\T}>\eta I$. We then have
\begin{align*}
    \Sigma_{\theta_i}+\sigma^2 vv^{\T}&>(\alpha_{i,j}c/\epsilon)^2vv^{\T}+\eta I\\
    \implies (\Sigma_{\theta_i}+\sigma^2 vv^{\T})^{-1}&<((\alpha_{i,j}c/\epsilon)^2vv^{\T}+\eta I)^{-1}
\end{align*}
Now, since $((\alpha_{i,j}c/\epsilon)^2vv^{\T}+\eta I)v=((\alpha_{i,j}c/\epsilon)^2+\eta)v$, Lemma \ref{lem:sig_inv} implies
\begin{align*}
    v^{\T}((\alpha_{i,j}c/\epsilon)^2vv^{\T}+\eta I)^{-1}v =((\alpha_{i,j}c/\epsilon)^2+\eta)^{-1}<(\epsilon/c\alpha_{i,j})^2.
\end{align*}
Thus, $v^{\T}(\Sigma_{\theta_i}+\sigma^2 vv^{\T})^{-1}v<v^{\T}((\alpha_{i,j}c/\epsilon)^2vv^{\T}+\eta I)^{-1}v<(\epsilon/c\alpha_{i,j})^2$, and  \eqref{eq:dir1}  holds. The conditions of Corollary \ref{cor:gauss_multi_reduced} are thus  satisfied, and it follows that $\M$ satisfies  $(\epsilon,\delta)$-distribution privacy with respect to $\Psi$.
\end{proof}
Note that the condition on line \ref{algline:dgm:7} of Algorithm \ref{alg:Gaussian_Directional} can always be satisfied by setting $\sigma_{i,j}^2=(\alpha_{i,j}c/\epsilon)^2$. However, doing so essentially disregards $\Sigma_{\theta_i}$ and reduces Algorithm \ref{alg:Gaussian_Directional} to the Gaussian variant of the Directional Expected Value Mechanism described in Section \ref{sec:dir_exp}. More generally a suitable and possibly smaller value of $\sigma_{i,j}^2$ can be found by solving $\det \left(\Sigma_{\theta_i}+(\sigma_{i,j}^2-(\alpha_{i,j}c/\epsilon)^2)vv^{\T}\right)=0$ then increasing $\sigma_{i,j}^2$ slightly to ensure that $\Sigma_{\theta_i}+(\sigma_{i,j}^2-(\alpha_{i,j}c/\epsilon)^2)vv^{\T}$ is strictly positive definite. The noise reduction provided by Algorithm \ref{alg:Gaussian_Directional} in comparison to the Directional Expected Value Mechanism is then proportional to  $(\alpha_{i,j}c/\epsilon)^2-\sigma_{i,j}^2$.
\section{ADDITIONAL PROOFS}
\subsection{Approximate Wasserstein Mechanism}
\label{ap:wass_proofs}
\boundedwhp*
\begin{proof}
For each $\theta\in\Theta$, let $R_{\theta}=\{t:\|t-\E[f_{\theta}]\|_1\le c\}$. Then, $\Pr(F(\data)\in R_{\theta}\given\theta)\ge 1-\delta/2$ for each $\theta\in\Theta$.

Fix a pair $(\theta_i,\theta_j)\in\Psi$, and consider the distributions $f_{\theta_i}$ and $f_{\theta_j}$. Let $\gamma\in\Gamma(f_{\theta_i},f_{\theta_j})$ be an arbitrary joint distribution with marginals $f_{\theta_i}$ and $f_{\theta_j}$, and define $R=R_{\theta_i}\times R_{\theta_j}$. Then, for all $(t,s)\in R$, the triangle inequality gives
\begin{align*}
    \|t-s\|_1
    &\le \|\E[f_{\theta_i}]-\E[f_{\theta_j}]\|_1+\|t-\E[f_{\theta_i}]\|_1+\|s-\E[f_{\theta_j}]\|_1\\
    &\le \sensE{\Psi}{F}+2c.
\end{align*}
By the union bound, we also have
\begin{align*}
    \int\int_{(t,s)\in R}\gamma(t,s)\dd{t}\dd{s}
    &=1-\int\int_{(t,s)\not\in R_{\theta_i}\times R_{\theta_j}}\gamma(t,s)\dd{t}\dd{s}\\
    &\ge 1-\int_{t\not\in R_{\theta_i}}f_{\theta_i}(t)\dd{t}-\int_{s\not\in R_{\theta_j}}f_{\theta_j}(s)\dd{s}\\
    &\ge 1- \delta/2-\delta/2\\
    &=1-\delta.
\end{align*}
It follows that $f_{\theta_i}$ and $f_{\theta_j}$ are $(\sensE{\Psi}{F}+2c,\delta)$-close.
\end{proof}
\subsection{Expected Value Mechanism}
\label{ap:exp_proofs}
The proofs of Theorem \ref{thm:exp_lap} and \ref{thm:exp_gauss} are straight-forward applications of the Laplace and Gaussian Mechanisms from differential privacy \citep{dwork_algorithmic_2014} using the condition that each pair of distributions $f_{\theta_i}$, $f_{\theta_j}$ are translations of each other. We provide only the proof of Theorem \ref{thm:exp_gauss}, as Theorem \ref{thm:exp_lap} can similarly be derived by replacing references to the Gaussian Mechanism from differential privacy with the Laplace Mechanism from differential privacy.
\expgauss*
\begin{proof}
Let $(\theta_i,\theta_j)\in\Psi$ be a pair of distributions and let $Z=(Z_1,Z_2,\dots, Z_m)$. Since $f_{\theta_i}$ and $f_{\theta_j}$ are translations of each other, there exists some $b\in\R^m$ such that $f_{\theta_i}(t)=f_{\theta_j}(t+b)$ for all $t\in\R^m$. Then, for all $S\subseteq \Range(\M)$,
\begin{align}
\Pr(\M(\data)\in S\given\theta_i)
&=\int_{t}f_{\theta_i}(t)\Pr(Z+t\in S)\dd{t}\nonumber\\
&=\int_{s}f_{\theta_j}(s)\Pr(Z+s-b\in S)\dd{s}.\nonumber
\end{align}
Now, note that $\|b\|_2=\|\E[f_{\theta_i}]- \E[f_{\theta_j}]\|_2\le \sensEt{\Psi}{F}$, and that $\sigma\ge  c\sensEt{\Psi}{F}/\epsilon$. Then, since each component of noise $Z_k\sim\Gauss(0,\sigma^2)$ is independently sampled, the Gaussian Mechanism from differential privacy \citep{dwork_algorithmic_2014} implies that 
\begin{align*}
\Pr(Z+s-b\in S)\le\exp(\epsilon)\Pr(Z+s\in S)+\delta
\end{align*}
for all $s\in\R^m$. 
Thus,
\begin{align*}
\Pr(\M(\data)\in S\given\theta_i)
&\le \int_{s}f_{\theta_j}(s)\left(\exp(\epsilon)\Pr(Z+s\in S)+\delta\right)\dd{s}\\
&= \exp(\epsilon)\int_{s}f_{\theta_j}(s)\Pr(Z+s\in S)\dd{s}+\delta\\
&=\exp(\epsilon)\Pr(\M(\data)\in S\given\theta_j)+\delta.
\end{align*}
It follows that $\M$ satisfies $(\epsilon,\delta)$-distribution privacy with respect to $\Psi$.
\end{proof}
\subsection{Directional Expected Value Mechanism}
\label{ap:dir_exp_proofs}
The proof of Theorem \ref{thm:dir_exp} applies ideas from the mechanism described in Section \ref{sec:basic_exp} while also taking into account the possible directions of the differences of mean vectors $\E[f_{\theta_i}]-\E[f_{\theta_j}]$. Note also that it can readily be modified to use Gaussian noise instead of Laplace noise.
\direxp*
\begin{proof}
Let $(\theta_i,\theta_j)\in\Psi$ be a pair of distributions. Since $f_{\theta_i}$ and $f_{\theta_j}$ are translations of each other, we have $f_{\theta_i}(t)=f_{\theta_j}(t+b)$, where $b=\E[f_{\theta_i}]-\E[f_{\theta_j}]$, for all $t\in\R^m$. Then, for all $S\subseteq \Range(\M)$,
\begin{align}
\Pr(\M(\data)\in S\given\theta_i)
&=\int_{t}f_{\theta_i}(t)\Pr(Yv+t\in S)\dd{t}\nonumber\\
&=\int_{s}f_{\theta_j}(s)\Pr(Yv+s-b\in S)\dd{s}.
\label{eq:dir_exp1}
\end{align}
By assumption, $b=\E[f_{\theta_i}]- \E[f_{\theta_j}]$ is parallel to the unit vector $v$. Thus, letting $S'=S\cap (s+\Span(v))$, we  have
\begin{align*}
    \Pr(Yv+s-b\in S)
    &=\Pr(Yv+s-b\in S')\\
    &=\Pr(Yv-b\in S'-s)\\
    &=\Pr(Y-\|b\|_2\in v^{\T}(S'-s)),
\end{align*}
where $v^{\T}(S'-s)=\{v^{\T}(x-s):x\in S'\}$.

Now, observe that $\|\|b\|_2\|_1=\|b\|_2=\|\E[f_{\theta_i}]- \E[f_{\theta_j}]\|_2\le \sensEt{\Psi}{F}$. Since $Y\sim \Lap(\sensEt{\Psi}{F}/\epsilon)$, the Laplace Mechanism from differential privacy \citep{dwork_calibrating_2006}
then gives that
\begin{align*}
\Pr(Y-\|b\|_2\in v^{\T}(S'-s))
&\le\exp(\epsilon)\Pr(Y\in v^{\T}(S'-s))\\
&=\exp(\epsilon)\Pr(Yv+s\in S).
\end{align*}
Thus, $\Pr(Yv+s-b\in S)\le \exp(\epsilon)\Pr(Yv+s\in S)$
for all $s\in\R^m$.
Together with \eqref{eq:dir_exp1}, this implies
\begin{align*}
\Pr(\M(\data)\in S\given\theta_i)
&\le \exp(\epsilon)\int_{s}f_{\theta_j}(s)\Pr(Yv+s\in S)\dd{s}\\
&=\exp(\epsilon)\Pr(\M(\data)\in S\given\theta_j).
\end{align*}
It follows that $\M$ satisfies $(\epsilon,0)$-distribution privacy with respect to $\Psi$.
\end{proof}

\subsection{Approximations with Max-Divergence}
\label{ap:approx_proofs}

\othermaxdiva*
\begin{proof}
For convenience of presentation, we write $F=F(\data)$ and  refer interchangeably between a random variable and its underlying distribution. 
By assumption, $\M(\data)\given F$ is independent of $\data$, thus, $\widetilde{\M}(\data)\given F$ is also independent of $\data$.
Let us write $\widetilde{\M}_{f}$ to denote the distribution of $\widetilde{\M}(\data)\given F\sim f$.

Since $\M$ satisfies $(\epsilon,\delta)$-distribution privacy with respect to $\Psi$, it follows that $\widetilde{\M}$ will as well when the approximations $\tilde{f}_{\theta}$ are  the true distributions. Thus, for all pairs $(\theta_i,\theta_j)\in\Psi$ and subsets $S\subseteq \Range(\widetilde{\M})$,
\begin{align}
    \Pr(\widetilde{\M}_{\tilde{f}_{\theta_i}}\in S)\le \exp(\epsilon)\Pr(\widetilde{\M}_{\tilde{f}_{\theta_j}}\in S)+\delta.
    \label{eq:other_maxdiv1}
\end{align}
Since $\max\left(\maxdiva{\eta}{\tilde{f}_{\theta}}{f_{\theta}},\maxdiva{\eta}{f_{\theta}}{\tilde{f}_{\theta}}\right)\le \lambda$
for all $\theta\in\Theta$, we also have
\begin{align*}
    \Pr(\widetilde{\M}_{f_{\theta_i}}\in S)
    &=\int_{t}f_{\theta_i}(t)\Pr(\widetilde{\M}(\data)\in S\given F=t)\dd{t}\\
    &\le \int_{t}(\exp(\lambda)\tilde{f}_{\theta_i}(t)+\eta)\Pr(\widetilde{\M}(\data)\in S\given F=t)\dd{t}\\
    &\le \exp(\lambda)\int_{t}\tilde{f}_{\theta_i}(t)\Pr(\widetilde{\M}(\data)\in S\given F=t)\dd{t}+\eta\\
    &=\exp(\lambda)\Pr(\widetilde{\M}_{\tilde{f}_{\theta_i}}\in S)+\eta.
\end{align*}
Similarly, $\Pr(\widetilde{\M}_{\tilde{f}_{\theta_j}}\in S)\le \exp(\lambda)\Pr(\widetilde{\M}_{f_{\theta_j}}\in S)+\eta$. Together with \eqref{eq:other_maxdiv1}, this gives
\begin{align*}
    \Pr(\widetilde{\M}(\data)\in S\given \theta_{i})
    &=\Pr(\widetilde{\M}_{f_{\theta_i}}\in S)\\
    &\le \exp(\lambda)\Pr(\widetilde{\M}_{\tilde{f}_{\theta_i}}\in S)+\eta\\
    &\le \exp(\lambda)(\exp(\epsilon)\Pr(\widetilde{\M}_{\tilde{f}_{\theta_j}}\in S)+\delta)+\eta\\
    &\le \exp(\lambda)(\exp(\epsilon)(\exp(\lambda)\Pr(\widetilde{\M}_{f_{\theta_j}}\in S)+\eta)+\delta)+\eta\\
    &=\exp(\epsilon')\Pr(\widetilde{\M}_{f_{\theta_j}}\in S)+\delta'\\
    &=\exp(\epsilon')\Pr(\widetilde{\M}(\data)\in S\given \theta_{j})+\delta'.
\end{align*}
It follows that $\widetilde{\M}$ satisfies $(\epsilon',\delta')$-distribution privacy with respect to $\Psi$.
\end{proof}
\subsection{Approximations with Wasserstein Distance}
To prove Theorem \ref{thm:relax_was_lap}, we first observe that the Wasserstein Mechanism of Section  \ref{sec:Wass} can be viewed as a result on the effect of Laplace noise on the max-divergence between distributions. In particular, Theorem \ref{thm:wasserstein} may be reinterpreted as follows.
\begin{lemma}
\label{lem:was_maxdiv_lap}
Let $F$ and $\widetilde{F}$ be two distributions on $\R^m$. Let $Z=(Z_1,Z_2,\dots, Z_m)$, where $Z_k\sim\Lap(\Winf{F}{\widetilde{F}}/\epsilon)$ for each $k$. Then, $\maxdiv{F+Z}{\widetilde{F}+Z}\le \epsilon$. 
\end{lemma}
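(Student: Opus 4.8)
The plan is to recognize Lemma \ref{lem:was_maxdiv_lap} as essentially a restatement of Theorem \ref{thm:wasserstein}. First I would set up a minimal instantiation of distribution privacy: take $\Theta=\{\theta_i,\theta_j\}$, $\Psi=\{(\theta_i,\theta_j)\}$, and a query $G$ whose induced distribution is $f_{\theta_i}=F$ under $\data\sim\theta_i$ and $f_{\theta_j}=\widetilde{F}$ under $\data\sim\theta_j$. Then $\sensW{\Psi}{G}=\Winf{F}{\widetilde{F}}$, so Theorem \ref{thm:wasserstein} applied to $G$ says that $G(\data)+Z$ with $Z\sim\Lap(0,(\Winf{F}{\widetilde{F}}/\epsilon)I)$ — which is exactly the noise in the lemma — satisfies $(\epsilon,0)$-distribution privacy with respect to $\Psi$.

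Next I would simply unpack this. By Definition \ref{def:distp}, $(\epsilon,0)$-distribution privacy with respect to $\Psi=\{(\theta_i,\theta_j)\}$ means that $\Pr(\M(\data)\in S\given\theta_i)\le\exp(\epsilon)\Pr(\M(\data)\in S\given\theta_j)$ for every $S$, i.e.\ $\Pr(F+Z\in S)\le\exp(\epsilon)\Pr(\widetilde{F}+Z\in S)$ for all $S$. Taking logs of the ratio and the supremum over $S$ yields exactly $\maxdiv{F+Z}{\widetilde{F}+Z}\le\epsilon$, using the $\delta=0$ specialization of Definition \ref{def:maxdiv_approx}.

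If a self-contained argument is preferred over invoking Theorem \ref{thm:wasserstein}, I would instead mirror the proof of Theorem \ref{thm:wasserstein_approx} with $\delta=0$: for any $\xi>0$ pick a coupling $\gamma\in\Gamma(F,\widetilde{F})$ with $\|t-s\|_1\le\Winf{F}{\widetilde{F}}+\xi$ on $\support(\gamma)$ (possible since the $\infty$-Wasserstein distance is an infimum), write $\Pr(F+Z\in S)=\int\int\gamma(t,s)\Pr(Z+t\in S)\dd{s}\dd{t}$ using the marginal property, apply the Laplace Mechanism of \citet{dwork_calibrating_2006} pointwise to get $\Pr(Z+t\in S)\le\exp\!\big(\epsilon(\Winf{F}{\widetilde{F}}+\xi)/\Winf{F}{\widetilde{F}}\big)\Pr(Z+s\in S)$, integrate back, and let $\xi\to0$. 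The degenerate case $\Winf{F}{\widetilde{F}}=0$, where the noise scale vanishes, is handled separately: then $F$ and $\widetilde{F}$ agree, so $F+Z$ and $\widetilde{F}+Z$ have the same law and the max-divergence is $0\le\epsilon$.

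The only subtlety — and it is minor — is the non-attainment of the infimum in the definition of $\infty$-Wasserstein distance, which forces the $\xi$-slack and limiting step in the direct proof; reducing to Theorem \ref{thm:wasserstein} sidesteps this entirely since that theorem has already absorbed it. One should also take the same care as in the proof of Theorem \ref{thm:wasserstein_approx} to match the $L_1$-sensitivity bound $\|t-s\|_1$ used by the Laplace Mechanism to the per-coordinate noise scale $\Winf{F}{\widetilde{F}}/\epsilon$.
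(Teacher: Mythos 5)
Your proposal is correct and matches the paper's approach: the paper gives no standalone proof of Lemma \ref{lem:was_maxdiv_lap}, simply asserting that Theorem \ref{thm:wasserstein} ``may be reinterpreted'' as this statement, which is precisely the reduction you carry out in your first two paragraphs (your self-contained argument via a $\xi$-slack coupling is a sound bonus that fills in details the paper leaves implicit).
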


We can then use Lemma \ref{lem:was_maxdiv_lap} to prove Theorem \ref{thm:relax_was_lap}. 

\relaxwaslap*
\begin{proof}
For convenience of presentation, we write $F=F(\data)$ and  refer interchangeably between a random variable and its underlying distribution. 
Let us write $\widetilde{\M}_{f}$ to denote the distribution of $\widetilde{\M}(\data)\given F\sim f$.

Since $\M$ satisfies $(\epsilon,\delta)$-distribution privacy with respect to $\Psi$, it follows that $\widetilde{\M}$ will as well when the approximations $\tilde{f}_{\theta}$ are  the true distributions. Thus, for all pairs $(\theta_i,\theta_j)\in\Psi$ and subsets $S\subseteq \Range(\widetilde{\M})$,
\begin{align*}
    \Pr(\widetilde{\M}_{\tilde{f}_{\theta_i}}\in S)\le \exp(\epsilon)\Pr(\widetilde{\M}_{\tilde{f}_{\theta_j}}\in S)+\delta.
\end{align*}
In turn, this implies
\begin{align}
\Pr(\widetilde{\M}_{\tilde{f}_{\theta_i}}+Z'\in S)\nonumber
&=\int_{z}\Pr(Z'=z)\Pr(\widetilde{\M}_{\tilde{f}_{\theta_i}}\in S-z)\dd{z}\nonumber\\
&\le \int_{z}\Pr(Z'=z)(\exp(\epsilon)\Pr(\widetilde{\M}_{\tilde{f}_{\theta_i}}\in S-z)+\delta)\dd{z}\nonumber\\
&\le \exp(\epsilon)\Pr(\widetilde{\M}_{\tilde{f}_{\theta_j}}+Z'\in S)+\delta.
\label{eq:relax_was_lap1}
\end{align}
Now, using properties of $\infty$-Wasserstein distance defined using the $L_1$ norm, we have
\begin{align*}
    \Winf{\widetilde{\M}_{f_{\theta_i}}}{\widetilde{\M}_{\tilde{f}_{\theta_i}}}
    &=\Winf{F+Z\given F\sim f_{\theta_i}}{F+Z\given F\sim \tilde{f}_{\theta_i}}\\
    &=\Winf{F\given F\sim f_{\theta_i}}{F\given F\sim \tilde{f}_{\theta_i}}\\
    &=\Winf{f_{\theta_i}}{\tilde{f}_{\theta_i}}\\
    &\le W.
\end{align*}
Hence, by Lemma \ref{lem:was_maxdiv_lap}, we have $\maxdiv{\widetilde{\M}_{f_{\theta_i}}+Z'}{\widetilde{\M}_{\tilde{f}_{\theta_i}}+Z'}\le \lambda$. Similarly, $\maxdiv{\widetilde{\M}_{\tilde{f}_{\theta_j}}+Z'}{\widetilde{\M}_{f_{\theta_i}}+Z'}\le \lambda$. Together with \eqref{eq:relax_was_lap1}, this gives
\begin{align*}
    \Pr(\widetilde{\M}(\data)+Z'\in S\given \theta_{i})
    &=\Pr(\widetilde{\M}_{f_{\theta_i}}+Z'\in S)\\
    &\le \exp(\lambda)\Pr(\widetilde{\M}_{\tilde{f}_{\theta_i}}+Z'\in S)\\
    &\le \exp(\lambda)(\exp(\epsilon)\Pr(\widetilde{\M}_{\tilde{f}_{\theta_j}}+Z'\in S)+\delta)\\
    &\le \exp(\lambda)(\exp(\epsilon)(\exp(\lambda)\Pr(\widetilde{\M}_{f_{\theta_j}}+Z'\in S))+\delta)\\
    &=\exp(\epsilon')\Pr(\widetilde{\M}_{f_{\theta_j}}+Z'\in S)+\delta'\\
    &=\exp(\epsilon')\Pr(\widetilde{\M}(\data)+Z'\in S\given \theta_{j})+\delta'.
\end{align*}
It follows that $\widetilde{\M}+Z'$ satisfies $(\epsilon',\delta')$-distribution privacy with respect to $\Psi$.
\end{proof}
We remark that the privacy guarantees in Theorem \ref{thm:relax_was_lap} may sometimes be attainable even without applying the additional noise $Z'$. Observe that the proof of Theorem \ref{thm:relax_was_lap} only requires $Z'$ to satisfy $\maxdiva{\eta}{\M_{f_{\theta}}+Z'}{\M_{\tilde{f}_{\theta}}+Z'}\le\lambda$ and $\maxdiva{\eta}{\M_{\tilde{f}_{\theta}}+Z'}{\M_{f_{\theta}}+Z'}\le\lambda$. Thus, if these inequalities are satisfied for $Z'=0$, then the Expected Value Mechanism may be applied safely using the approximations $\tilde{f}_{\theta}$ with no additional noise required.

\section{ADDITIONAL EXPERIMENTAL DETAILS}
\subsection{Gaussian Modeling}
\label{ap:more_results_modeling}
To approximate each distribution $f_{\theta}$ with a Gaussian distribution, we used the standard parameter estimation approach \cite{forbes_statistical_2011} of randomly sampling from $f_{\theta}$, then computing the sample mean $\mu_{\theta}$ and covariance matrix $\Sigma_{\theta}$ to yield an approximation $\Gauss(\mu_{\theta},\Sigma_{\theta})$. Specifically, for each distribution $\theta\in\Theta$,  we randomly sampled a subset of 100 records from the Adult dataset having the value of the global sensitive property $p_I$ or $p_W$ as specified by $\theta$. We then computed the query $F$ over this subset, producing one sample value for $f_{\theta}$. Repeating this process \num{1000} times yielded \num{1000} samples from which we computed a mean vector $\mu_{\theta}$ and covariance matrix $\Sigma_{\theta}$.

\begin{figure}
    \centering
     \begin{subfigure}[t]{0.46\textwidth}
        \centering
        \includegraphics{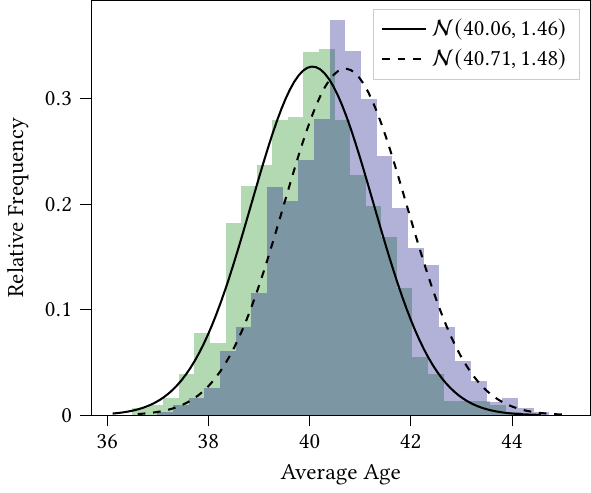}
        \label{fig:a_model_approx_a}
    \end{subfigure}
    \begin{subfigure}[t]{0.46\textwidth}
        \centering
        \includegraphics{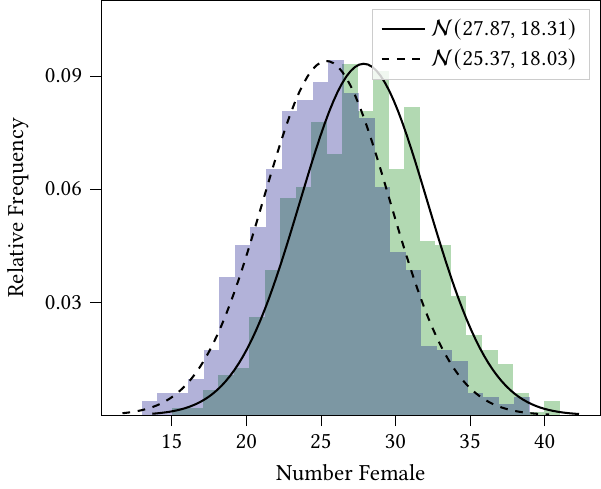}
        \label{fig:a_model_approx_c}
    \end{subfigure}
    \caption{
     Gaussian approximations for selected statistics of interest over randomly sampled subsets of the Adult dataset. Here, the subsets are sampled to have $n=100$ records with  fixed proportions $p_I$ of individuals having income $>\$50$K,
     where green and purple histograms are
    used for $p_I = 0.45$ and $p_I = 0.55$, respectively.}
    \label{fig:a_model_approx}
\end{figure}

Figure \ref{fig:a_model_approx} provides a visualization of our Gaussian approximations for selected query components and underlying sensitive property values. We remark that the variance of each component does vary slightly as the underlying sensitive property value varies, suggesting that our assumption on the covariance matrices $\Sigma_{\theta_i}$ and $\Sigma_{\theta_j}$ being equal for all pairs $(\theta_i,\theta_j)\in\Psi$ cannot hold exactly. For instance, the variance of the number of females in each subset of data is $\approx 18.31$ when $p_I=0.45$, and $\approx 18.03$ when $p_I=0.55$. However, the variance and covariance values are similar enough that we may still apply our mechanisms, albeit with the caveat of slightly weaker privacy guarantees as discussed in Section \ref{sec:relax}.

\subsection{Further Attack Results}
\label{ap:more_attack}
Table \ref{tab:a_gauss_delta} shows how the accuracy of the attack in Section \ref{sec:attack} changes as we apply Gaussian variants of the Expected Value Mechanism with varying values of $\epsilon$ and $\delta$. We note that varying $\delta$ between 0.0001 and 0.01 does not substantially impact the measured attack accuracies in comparison to varying $\epsilon$ between 0.2 and 5. This is expected as our Gaussian variants of the Expected Value Mechanism roughly apply noise scaled to $c\sensEt{\Psi}{F}/\epsilon$, where $c=\sqrt{2\ln (1.25/\delta)}$ grows sub-logarithmically as $\delta$ decreases. Nonetheless, we remark that there is still an important theoretical distinction between $(\epsilon,\delta)$-distribution privacy for different values of $\delta$, even if not evident against this specific property inference attack.

\begin{table}[t]
\caption{Accuracy of the attack on income when Gaussian variants of the Expected Value Mechanism are applied with varying values of $\epsilon$ and $\delta$. Here,  $\Delta p_{I}=0.1$. That is, the attack aims to distinguish between subsets of data satisfying $p_I=0.45$ and subsets satisfying $p_I=0.55$.} \label{tab:a_gauss_delta}
\begin{center}
\begin{tabular}{llccc}
\cline{3-5}
\textbf{}                 & \multicolumn{1}{c}{\textbf{}} & \multicolumn{3}{c}{\textbf{Attack Accuracy}}     \\ \hline 
\multicolumn{2}{c}{\textbf{Mechanism}}                    & $\delta=0.0001$ & $\delta=0.001$ & $\delta=0.01$ \\ \hline \hline
\multirow{3}{*}{ExpM (G)} & $\epsilon=0.2$                & 0.492 & 0.500 & 0.502 \\
                          & $\epsilon=1$
& 0.506 & 0.511 & 0.520 \\
                          & $\epsilon=5$
& 0.530 & 0.539 & 0.549 \\ \hline
\multirow{3}{*}{EigM (G)} & $\epsilon=0.2$                & 0.507 & 0.501 & 0.512 \\
                          & $\epsilon=1$
& 0.510 & 0.512 & 0.503 \\
                          & $\epsilon=5$
& 0.537 & 0.550 & 0.545 \\ \hline
\multirow{3}{*}{DauM (G)} & $\epsilon=0.2$                & 0.517 & 0.508 & 0.511 \\
                          & $\epsilon=1$
& 0.548 & 0.545 & 0.562 \\
                          & $\epsilon=5$
& 0.714 & 0.739 & 0.744 \\ \hline
\end{tabular}
\end{center}
\vspace{-10pt}
\end{table}

\vfill

\end{document}